\setlist{noitemsep,parsep=0pt,partopsep=0pt,topsep=0pt}
\let \savenumberline \numberline
\def \numberline#1{\savenumberline{#1.}}
\let\oldfootnote\footnote
\renewcommand\footnote[1]{\oldfootnote{\hspace{.5mm}#1}}
\titlespacing*{\subsection}{0pt}{6pt}{0pt}
\renewenvironment{proof}[1][\proofname] {\par\pushQED{\qed}\normalfont\topsep6\p@\@plus6\p@\relax\trivlist\item[\hskip\labelsep\bfseries#1\@addpunct{.}]\ignorespaces}{\popQED\endtrivlist\@endpefalse}
\definecolor{dark-red}{rgb}{0.4,0.15,0.15}
\definecolor{dark-blue}{rgb}{0.15,0.15,0.75}
\definecolor{medium-blue}{rgb}{0,0,0.5}
\newcounter{example}
\newenvironment{example}[1][]{\refstepcounter{example}\par\medskip\noindent%
\textbf{Example~\theexample. #1} \rmfamily}{\medskip}
\newtheorem{theorem}{Theorem}
\newtheorem{claim}{Claim}
\newtheorem{corollary}{Corollary}
\newtheorem{lemma}{Lemma}
\newtheorem{proposition}{Proposition}
\theoremstyle{remark}
\theoremstyle{definition}
\renewcommand{\bar}{\overline}
\providecommand{\abs}[1]{\lvert#1\rvert}
\def \cal{\mathcal}
\def \Reals{\mathbb R}
\renewcommand{\epsilon}{\varepsilon}
\newcommand{\finishpoint}{\hfill $\diamond$}
\newcommand{\E}{\mathbb{E}}
\newcommand{\mailto}[1]{\href{mailto:#1}{\texttt{#1}}} 
\newcommand{\citepos}[1]{\citeauthor{#1}'s \citeyearpar{#1}} 
\begin{document}

\begin{titlepage}

\title{\textbf{\Large Information Validates the Prior:\\ A Theorem on Bayesian Updating and Applications}\thanks{A previous version of this paper was titled ``\href{http://www.columbia.edu/\~nk2339/Papers/KLS-disclosure.pdf}{A Theorem on Bayesian Updating and Applications} \href{http://www.columbia.edu/\~nk2339/Papers/KLS-disclosure.pdf}{to Communication Games}''. We thank Nageeb Ali, Ilan Guttman, Rick Harbaugh, Keri Hu, Alessandro Lizzeri, Steve Matthews, Paula Onuchic, Ariel Pakes, Andrea Prat, Mike Riordan, Satoru Takahashi, Jianrong Tian, Enrico Zanardo, Jidong Zhou, and various seminar and conference audiences for comments. We have benefitted from thoughtful suggestions made by the Editor and three referees. Sarba Raj Bartaula, Bruno Furtado, Dilip Ravindran, and Teck Yong Tan provided excellent research assistance. Kartik gratefully acknowledges financial support from the NSF (Grant SES-1459877).}}
\author{{Navin Kartik}\thanks{Department of Economics, Columbia University. E-mail: \mailto{nkartik@columbia.edu}.}  \and {Frances Xu Lee}\thanks{Quinlan School of Business, Loyola University Chicago. E-mail: \mailto{francesxu312@gmail.com}.} \and {Wing Suen}\thanks{Faculty of Business and Economics,
The University of Hong Kong. E-mail: \mailto{wsuen@econ.hku.hk}.}}
\date{July 21, 2020}

\maketitle

\begin{abstract}
\noindent 
We develop a result on expected posteriors for Bayesians with heterogenous priors, dubbed information validates the prior (IVP). Under familiar ordering requirements, Anne expects a (Blackwell) more informative experiment to bring Bob's posterior mean closer to Anne's prior mean. We apply the result in two contexts of games of asymmetric information: voluntary testing or certification, and costly signaling or falsification. IVP can be used to determine how an agent's behavior responds to additional exogenous or endogenous information. We discuss economic implications.
\end{abstract}

\bigskip

\thispagestyle{empty}
\end{titlepage}

\newpage

\begingroup
\singlespacing
\addtocontents{toc}{\protect\setcounter{tocdepth}{1}}
\tableofcontents
\addtocontents{toc}{~\hfill\textbf{Page}\par}
\thispagestyle{empty}
\endgroup

\newpage
\onehalfspacing
\setcounter{page}{1}

\section{Introduction}
\label{sec:intro}

Bayesian agents revise their beliefs upon receiving new 
information.  From an ex-ante point of view, however, one cannot expect information to systematically alter one's beliefs in any particular direction. More precisely, a fundamental property of Bayesian updating is that beliefs are a martingale: an agent's expectation of his posterior belief is equal to his prior belief.   But what about an agent's expectation of \emph{another} agent's posterior belief when their current beliefs differ? Relatedly, should agents expect new information to systematically affect their disagreement, and if so, how?
These questions are not only of intrinsic interest, but tackling them is useful for economics with asymmetric information.


\paragraph{Information validates the prior.} In \autoref{sec:IVP}, we develop the following general result.
Let $\Omega\subset \Reals$ be possible states. Bayesians Anne ($A$) and Bob ($B$) have mutually-known but different priors over $\Omega$, with means $m_A$ and $m_B$. A signal $s$ will be drawn from a known information structure or experiment $\mathcal E$. Let $m^s_B$ denote Bob's posterior mean after observing the signal. Let $\E_A^\cal E[m^s_B]$ be Anne's ex-ante expectation of Bob's posterior mean under experiment $\cal E$.

Consider two experiments $\mathcal E$ and $\mathcal{\tilde E}$, with $\mathcal{E}$ more informative than $\mathcal{\tilde E}$ in the sense of \citet{Blackwell51,Blackwell53}. \autoref{thm:IVP} establishes that under familiar ordering requirements,\footnote{The priors must be likelihood-ratio ordered and experiments must satisfy the monotone likelihood-ratio property. These assumptions are unrestrictive if the state is binary.}
\begin{equation}
\label{e:intro}
{m}_A \leq (\geq)\, m_B \implies 
\E_A^{\mathcal E} \left[m^s_B\right]\leq (\geq) \, \E_A^{\tilde{\mathcal E}} \left[m^{\tilde s}_B\right].
\end{equation}
In words: if Anne has a lower (resp., higher) prior mean than Bob, then Anne predicts that a more informative experiment will, on average, reduce (resp., raise) Bob's posterior mean by a larger amount than a less informative experiment. Put differently, Anne expects more information to further validate her prior in the sense of bringing Bob's posterior mean closer to her prior mean. Of course, Bob expects just the reverse. We refer to the result as \emph{information validates the prior}, IVP hereafter.

IVP has an implication about expected disagreement. A particular signal can lead to larger posterior disagreement than prior disagreement. Nevertheless, IVP implies (\autoref{cor:disagree}) that when beliefs' disagreement is quantified by the difference in means, both Anne and Bob predict that a more informative experiment will, on average, reduce their posterior disagreement by more. At the extreme, both predict zero posterior disagreement under a fully informative experiment---even though each one's expectation of the other's posterior mean may be very different.

\paragraph{Applications.} IVP is a statistical result, which we believe is of intrinsic interest. Our paper shows why it is also instrumentally useful in games of asymmetric information.
After all, even in common-prior environments, private information can endow Anne, an informed agent, with a different belief about a fundamental than Bob, an uninformed agent. In equilibrium, Anne anticipates how her actions will affect Bob's belief. Anne's strategic incentives may depend on how she expects new information to affect Bob, for which IVP is a useful tool.  The new information can be exogenous or endogenous, e.g., owing to the behavior of still other agents.

We develop these points in two contexts.

\paragraph{\emph{Voluntary testing}.}  \autoref{sec:testing} studies voluntary testing or certification. An agent has some private information about his true ability or product quality.  He can choose to undertake a costly test that provides an independent public signal of quality. Gross of the testing cost, the agent's payoff is the market's posterior expectation of his quality.

\autoref{prop:testing} establishes that under familiar informational assumptions, the more (Blackwell) informative the test, the less the agent will choose to get tested. The logic we elucidate, using IVP, is that because of pooling, the marginal type who takes the test expects more informative tests to reduce the benefit of getting tested.

\autoref{prop:testing} offers economic insights. When market information is of concern, a tradeoff must be resolved between more informative tests and participation \citep[cf.][]{HR18}. But in settings where information only affects surplus division, ex-ante efficiency improves with better tests because they reduce the deadweight loss from testing. We further discuss implications for a monopolist certifier's test choice, generalizing \citepos{Lizzeri99} observation about the emergence of uninformative tests.

\paragraph{\emph{Costly signaling.}}
IVP is also useful in games of asymmetric information even when the information asymmetry is eliminated in equilibrium by an agent's own behavior. \autoref{sec:lying} deploys IVP in a sender-receiver game with lying costs \citep{Kartik09}, but explains how the logic also applies to canonical applications like education signaling \citep{Spence73}. Our concern is how exogenous information the receiver obtains affects the sender's signaling.

\autoref{prop:efficiency} establishes that, under reasonable conditions, better exogenous information reduces the sender's benefit from falsification to appear more favorable.  Intuitively, the sender expects any favorable receiver belief he induces to get neutralized more by better exogenous information.  Costly falsification becomes less attractive. Consequently, better exogenous information reduces wasteful signaling---every sender type is better off, even under full separation. We further discuss how this implies a strategic complementary when there are multiple, possibly opposed, senders with bounded signal spaces.

\paragraph{Other applications.} We believe IVP will be useful in other contexts too.  Indeed, the logic of IVP unifies aspects of mechanisms in some existing papers that study models with heterogeneous priors under specific information structures. 
See, for example, \citet{Yildiz04}, \citet{CK09}, \citet[][Proposition 5]{VdS10},  \citet[][Proposition 8]{Hirsch13}, and \citet[][Proposition 5]{SY12}.
We ourselves have used IVP to study information acquisition prior to disclosure \citep{KLS17-Acquisition}.

\section{Information Validates the Prior}
\label{sec:IVP}

Fix any finite (multi-)set of states $\Omega\equiv \{\omega_1,\ldots,\omega_L\} \subset \mathbb{R}$, with $\omega_1\leq \ldots\leq \omega_L$. We denote a generic element of $\Omega$ by either $\omega$ or $\omega_l$.  Fix a measurable space of signals, $(S,\mathcal S)$, endowed with a $\sigma$-finite reference measure. An \emph{experiment} is $\mathcal E\equiv \{P_\omega\}_{\omega\in \Omega}$, where $P_\omega$ is a probability measure over signals in state $\omega$. We only consider experiments for which each $P_\omega$ is absolutely continuous with respect to the reference measure, so that a Radon-Nikodym derivative exists, denoted $p(s|\omega)$; this is the probability density or mass function.\footnote{This permits the familiar definition of monotone likelihood ratio below. \citepos{LW20} ``strong stochastic dominance'' should be applicable more generally.}

Experiment $\mathcal E$ is (Blackwell) \emph{more informative} than experiment $\tilde {\mathcal E} \equiv \{\tilde P_\omega\}_{\omega\in \Omega}$,  whose signal may be denoted $\tilde s\in S$ for clarity, 
if there is a Markov kernel  $Q(\cdot| s)$ such that for each $\omega\in \Omega$ and every $\Sigma\in \mathcal{S}$, 
$$\tilde P_\omega(\Sigma)=\int_{S}Q(\Sigma| s)\, \mathrm d P_\omega(s).$$
This definition captures the statistical notion that $\cal{\tilde E}$ does not provide any information beyond $\mathcal E$: state by state, the distribution of signals in $\cal{\tilde E}$ can be generated by taking signals from $\cal E$ and transforming them through the state-independent kernel $Q(\cdot)$. As established by \citet{Blackwell53}, this notion is equivalent to that of $\mathcal E$ being more valuable than $\mathcal{\tilde E}$ for decision making in canonical senses.

An experiment is an \emph{MLRP-experiment} if there is a total order on $S$, denoted $\succeq$ (with asymmetric relation $\succ$), such that the monotone likelihood-ratio property (MLRP) holds:
$$s' \succ s\text{ and } \omega'>\omega \implies p(s'|\omega') p(s|\omega)\geq p(s'|\omega) p(s|\omega').$$

Let $\beta \in \Delta \Omega$ denote a belief, with $\beta(\omega)$ the probability ascribed to state $\omega$.   We say that a belief $\beta'$ \emph{likelihood-ratio (LR) dominates} belief ${\beta}$, written ${\beta}\leq_{LR} {\beta'}$, if for all $\omega' > \omega$, 
\begin{equation*}
 \beta'(\omega')\beta(\omega) \geq \beta(\omega')\beta'(\omega).
\end{equation*}
A pair of beliefs are \emph{likelihood-ratio ordered} if one likelihood-ratio dominates the other.

Anne ($A$) and Bob ($B$) are Bayesians with full-support priors on $\Omega$,
denoted by 
$\beta_A$ and ${\beta}_B$ respectively, with expectations or means $m_A$ and $m_B$.\footnote{\label{fn:rescale}That is, $m_i \equiv \sum_{\omega} \omega \beta_i(\omega)$. We could just as well take $m_i$ to be $\sum_\omega h(\omega)\beta_i(\omega)$ for any increasing $h:\Omega\to \Reals$. (By ``increasing'', we mean ``weakly increasing''; similarly for related terminology throughout this paper, unless made explicit otherwise.) This amounts to relabeling each state $\omega$ as $h(\omega)$.} Given an experiment $\mathcal E$, we denote individual $i$'s posterior mean by ${m}_i^s$, computed by Bayes rule. Let $\E_i^{\mathcal E}[m^s_j]$ denote $i$'s ex-ante expectation of $j$'s posterior mean.

If the individuals have the same prior, then $\E_A^{{\mathcal E}} [m^s_B]=\E_A^{{\mathcal E}} [m^s_A]=m_A$ by iterated expectation. Under different priors, the same conclusion holds when $\mathcal E$ is fully informative (i.e., every signal reveals the state) because prior differences become irrelevant. On the other hand, if $\mathcal E$ is  uninformative (i.e., no signal provides any information), then $\E_A^{{\mathcal E}} [m^s_B]=m_B$. For the intermediate cases between fully informative and uninformative experiments, there is certain monotonicity under some conditions:
\begin{theorem}
\label{thm:IVP}
Let ${\beta}_A\leq_{LR} \beta_B$, and $\mathcal E$ and $\tilde{\mathcal E}$ be MLRP-experiments. If ${\mathcal E}$ is more informative than $\tilde{\mathcal E}$, then:
\vspace{-5pt}
\begin{align*}
&m_A\leq \E_A^{\mathcal E} \left[m_B^s\right]\leq \E_A^{\tilde{\mathcal E}} \left[m^{\tilde s}_B\right]\leq m_B; \ \ {and} \\[5pt]
&m_A\leq \E_B^{\tilde{\mathcal E}} \left[m^{\tilde s}_A\right]\leq \E_B^{\mathcal E} \left[m_A^s\right]\leq m_B.
\end{align*}
\end{theorem}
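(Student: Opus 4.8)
The plan is to translate both chains into statements about covariances under Bob's beliefs. Put $\rho(\omega):=\beta_A(\omega)/\beta_B(\omega)$; since $\beta_A\leq_{LR}\beta_B$, the function $\rho$ is decreasing. For any MLRP-experiment $\mathcal F$ with densities $p(s\mid\omega)$, write $p_B(s):=\sum_\omega p(s\mid\omega)\beta_B(\omega)$, $p_A(s):=\sum_\omega p(s\mid\omega)\beta_A(\omega)$, $g(s):=m_B^s$, and $r(s):=p_A(s)/p_B(s)$ (well defined $p_B$-a.e.\ since $\beta_B$ has full support). A one-line computation gives $r(s)=\E_B[\rho(\omega)\mid s]$ and $\E_A^{\mathcal F}[m_B^s]=\int g(s)r(s)p_B(s)\,\mathrm ds=\E_B^{\mathcal F}[gr]$; since $\E_B^{\mathcal F}[g]=m_B$ by the martingale property and $\E_B^{\mathcal F}[r]=1$, I obtain the key identity
\[
\E_A^{\mathcal F}\!\left[m_B^s\right]-m_B=\mathrm{Cov}_B^{\mathcal F}(g,r).
\]

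First I would dispatch the outer inequalities, which hold already for a single MLRP-experiment $\mathcal F$. MLRP makes Bob's posterior $\beta_B(\cdot\mid s)$ increasing in $s$ (along the order $\succeq$ on $S$) in the likelihood-ratio and hence first-order stochastic dominance sense, so $g$ is increasing and, because $\rho$ is decreasing, $r$ is decreasing in $s$; two oppositely-monotone functions of $s$ have nonpositive covariance, so the identity yields $\E_A^{\mathcal F}[m_B^s]\leq m_B$. For the lower bound, note $\beta_A(\cdot\mid s)\leq_{LR}\beta_B(\cdot\mid s)$ for every $s$ (the ratio $\beta_B(\omega\mid s)/\beta_A(\omega\mid s)$ equals $\rho(\omega)^{-1}$ up to an $\omega$-independent factor, hence is increasing in $\omega$), so $m_A^s\leq m_B^s$ pointwise; integrating against $p_A$ and using the martingale property for Anne gives $m_A=\E_A^{\mathcal F}[m_A^s]\leq\E_A^{\mathcal F}[m_B^s]$. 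Applying this to $\mathcal E$ and to $\tilde{\mathcal E}$ delivers both outer inequalities of the first chain.

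The main step is the middle inequality, $\E_A^{\mathcal E}[m_B^s]\leq\E_A^{\tilde{\mathcal E}}[m_B^{\tilde s}]$. Fix a garbling kernel $Q$ with $\tilde P_\omega=\int Q(\cdot\mid s)\,\mathrm dP_\omega(s)$, and work on the joint law $\mathbb P$ of $(\omega,s,\tilde s)$ in which $\omega\sim\beta_B$, $s\mid\omega\sim P_\omega$, and $\tilde s\mid(s,\omega)\sim Q(\cdot\mid s)$; the $(\omega,\tilde s)$-marginal is exactly $\tilde{\mathcal E}$ under $\beta_B$. Since $\tilde s$ is conditionally independent of $\omega$ given $s$, the tower property gives $\tilde g(\tilde s):=m_B^{\tilde s}=\E_{\mathbb P}[g(s)\mid\tilde s]$ and $\tilde r(\tilde s):=\tilde p_A(\tilde s)/\tilde p_B(\tilde s)=\E_{\mathbb P}[r(s)\mid\tilde s]$. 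Invoking the key identity for both experiments and then conditioning on $\tilde s$,
\[
\E_A^{\tilde{\mathcal E}}\!\left[m_B^{\tilde s}\right]-\E_A^{\mathcal E}\!\left[m_B^s\right]=\E_{\mathbb P}\!\big[\E_{\mathbb P}[g(s)\mid\tilde s]\,\E_{\mathbb P}[r(s)\mid\tilde s]-\E_{\mathbb P}[g(s)r(s)\mid\tilde s]\big]=-\,\E_{\mathbb P}\!\big[\mathrm{Cov}_{\mathbb P}(g(s),r(s)\mid\tilde s)\big].
\]
Because $g$ is increasing and $r$ decreasing in $s$ along $\succeq$ (the previous paragraph, applied to $\mathcal E$), each conditional covariance $\mathrm{Cov}_{\mathbb P}(g(s),r(s)\mid\tilde s)$ is nonpositive, so the difference is nonnegative. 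This completes the first chain; the second follows by symmetry—re-apply the first chain after relabeling each state $\omega$ as $-\omega$ (permitted by \autoref{fn:rescale}) and interchanging $A$ and $B$, which preserves the MLRP and Blackwell hypotheses, turns $\beta_A\leq_{LR}\beta_B$ into the required hypothesis, and reverses every inequality.

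I expect this last step to be the crux. Its content is that a Blackwell garbling averages Bob's posterior mean while simultaneously averaging the belief-discrepancy weight $r$, and the inequality states precisely that coarsening a signal cannot destroy the negative association between $g$ and $r$ that MLRP forces on the finer signal. The only points that need care are measure-theoretic—existence of the joint law $\mathbb P$ and of a regular conditional distribution of $s$ given $\tilde s$, and the a.e.\ well-definedness of $r$ and $\tilde r$—all routine here given the finite state space and the maintained absolute-continuity assumptions.
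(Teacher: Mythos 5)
Your proof is correct, and it reaches the paper's conclusion by a route that is recognizably the same engine in different packaging. The paper also couples the two experiments through the garbling kernel, but it argues directly with distributions: conditional on $\tilde s$, Anne's predictive distribution over $s$ is FOSD-dominated by Bob's (because the posteriors on $\omega$ given $\tilde s$ remain LR-ordered and $\mathcal E$'s MLRP makes signal distributions FOSD-increasing in the state), so $\E_A^{\mathcal E}[m_B^s\mid\tilde s]\le\E_B^{\mathcal E}[m_B^s\mid\tilde s]=m_B^{\tilde s}$, and then it integrates over $\tilde s$ under $\beta_A$. Your conditional covariance inequality $\mathrm{Cov}_{\mathbb P}(g,r\mid\tilde s)\le 0$ is, after the change of measure $\mathrm dP_A(s\mid\tilde s)/\mathrm dP_B(s\mid\tilde s)=r(s)/\tilde r(\tilde s)$, exactly the paper's inequality \eqref{e:condmean}; your law-of-total-covariance aggregation plays the role of the paper's final iterated expectation. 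What your presentation buys is a clean scalar identity, $\E_A^{\mathcal F}[m_B^s]-m_B=\mathrm{Cov}_B^{\mathcal F}(g,r)$, which makes the mechanism transparent (garbling can only shrink the magnitude of a covariance between an increasing and a decreasing function of the signal) and which isolates precisely where each hypothesis enters---indeed your direct proofs of the outer inequalities make explicit the paper's remarks that undershooting needs no assumption on the experiment while the directional result needs only the experiment's MLRP plus an FOSD ordering of priors. The paper instead obtains the outer inequalities for free from the middle one by comparing against the fully informative and uninformative experiments. Your symmetry argument for the second chain (relabel $\omega\mapsto-\omega$, reverse the signal order, swap $A$ and $B$) matches the paper's ``$B$'s are analogous.'' The only caveats are the measure-theoretic ones you already flag, which are at the same level of informality as the paper's own proof.
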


\autoref{thm:IVP} says that under its ordering requirements,\footnote{The priors' likelihood-ratio ordering may be viewed as without loss---the states can be relabeled---so long as the experiments' MLRP and beliefs' means are understood with respect to the states' relabeling.} each individual $i$ predicts that a more informative experiment will, on average, bring the other's posterior mean closer to $i$'s prior. In this sense more information is expected to further validate one's prior; or more succinctly, \emph{information validates the prior} (\emph{IVP}). It bears emphasis that by relabeling states (cf. \autoref{fn:rescale}), information validates the prior not only in the sense of the mean state, but the expectation of any increasing function of the state, such as the probability of any $\{\omega_l,\ldots,\omega_L\}$. 

The proof of \autoref{thm:IVP} is in \autoref{app:proofs}. For an illustration, suppose Bayesian updating takes the canonical linear form: the posterior mean is a convex combination of the prior mean and the signal, as is the case for any exponential family of signals with conjugate prior (e.g., normal-normal). It then holds that for any signal $s\in \Reals$ under experiment $\cal E$, $m^s_j=(1-\alpha^{\cal E}) m_j + \alpha^{\cal E} s$ for some $\alpha^{\cal E} \in [0,1]$. Hence, 
\begin{equation*}
\E^{\cal E}_i \left[m^s_j\right]=(1-\alpha^{\cal E}) m_j + \alpha^{\cal E} m_i.
\end{equation*}
The weight $\alpha^{\cal E}$ is larger when experiment $\mathcal E$ is more informative, which implies \autoref{thm:IVP}'s conclusions.

To shed light on the ordering requirements in \autoref{thm:IVP}, it is useful to decompose its conclusions. Each individual $i$ expects the other's posterior mean to:
\begin{enumerate}[label=\roman*.,topsep=0pt,itemsep=3pt]
	\item\label{direction} (\emph{Direction}) move towards	$i$'s prior mean under any experiment: e.g., $\E^{\mathcal E}_A[m^s_B]\leq m_B$;
	\item\label{bound} (\emph{Undershooting}) not move past $i$'s prior mean under any experiment: e.g., $m_A\leq \E^{\mathcal E}_A[m^s_B]$;
	\item\label{intensity} (\emph{Intensity}) move more when the experiment is more informative: e.g., $\E^{\mathcal E}_A[m^{s}_B]\leq \E^{\mathcal {\tilde E}}_A[m^{\tilde s}_B]$. 
\end{enumerate}

Note that the third point implies the previous two given the observations before the theorem about extreme experiments. The directional result (point \ref{direction}) does not require priors to be likelihood-ratio ordered; given the experiment's MLRP, the priors' ordering by first-order stochastic dominance (FOSD) is sufficient, and is in fact essentially necessary.\footnote{Suppose $m_A\leq m_B$ with $\beta_A \nleq_{FOSD} \beta_B$. Then for some index $k$, $\sum_{l\geq k}\beta_B(\omega_l)<\sum_{l\geq k}\beta_A(\omega_l)$. For an MLRP-experiment $\mathcal E$ that only reveals whether $\omega<\omega_k$ or $\omega\geq \omega_k$, it can be verified that $\E^{\mathcal E}_A[m^s_B]>m_B$ so long as $\omega_1<\omega_L$.
} 
\suppappendixref{app:tightness} shows that even with likelihood-ratio ordered priors, the directional result can fail for a non-MLRP-experiment. The undershooting result (point \ref{bound}) does not require any assumption on the experiment; the priors' likelihood-ratio ordering is sufficient, and is in fact essentially necessary.\footnote{\label{fn:FOSD} If $\beta_A \nleq_{LR} \beta_B$, then so long as $\omega_1<\ldots<\omega_L$ there is an MLRP-experiment $\mathcal E$ such that 
$\E_A^{{\mathcal E}} [m^s_B]<m_A$.
Specifically, let $l$ be any index such that $\frac{\beta_B(\omega_{l+1})}{\beta_B(\omega_l)}<\frac{\beta_A(\omega_{l+1})}{\beta_A(\omega_l)}$ and consider $\mathcal E$ that fully reveals every state except $\{\omega_l,\omega_{l+1}\}$, which are pooled together. \citet[Proposition 4]{OR19} note a related point.} The intensity result (point \ref{intensity}) does not require any assumption on the less informative experiment; the priors' likelihood-ratio ordering and the more informative experiment's MLRP are sufficient. This is verified by studying the theorem's proof, which is fairly straightforward because the experiments are Blackwell-ranked. We leave to future research the question of whether less restrictive comparisons of experiments would suffice.

\citet{FK14} prove that conditional on any event being true, a Bayesian's expected posterior on that event (ignorant of the truth) is larger than her prior; see also \citet{Good65}. It can be shown that for a binary state, that result and \autoref{thm:IVP} are equivalent. Note that any priors and experiments satisfy the theorem's ordering assumptions with a binary state. More generally though, neither result implies the other.

IVP has a noteworthy implication about expected disagreement. Consider measuring disagreement between beliefs by the distance in their means.\footnote{The distance between individuals'
expectations is an interesting measure of disagreement, but obviously coarse and not without limitations.
\citet{Zanardo17} studies disagreement axiomatically; he provides a result in the spirit of our \autoref{cor:disagree} for a family of disagreement measures that includes the Kullback-Leibler divergence.} Typically, experiments can generate signals for which posterior disagreement is larger than prior disagreement.\footnote{Yet no signal can polarize beliefs in the sense of FOSD \citep[][Theorem 1]{BHK13}.}
But can individuals expect, ex ante, to disagree more after observing more information?

\begin{corollary}
\label{cor:disagree}	
Let ${\beta}_A\leq_{LR} {\beta}_B$, and $\mathcal E$ and $\tilde{\mathcal E}$ be MLRP-experiments. If ${\mathcal E}$ is more informative than $\tilde{\mathcal E}$, then for either $i$ and $j$,
\begin{equation*}
\E_i^{\mathcal E} \left[\abs{m^s_i-m^s_j}\right] \leq \E_i^{\tilde{\mathcal E}} \left[\abs{m^{\tilde s}_i-m^{\tilde s}_j}\right].
\end{equation*}
\end{corollary}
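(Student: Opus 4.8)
The plan is to deduce the corollary directly from \autoref{thm:IVP} after disposing of the absolute value. The key preliminary observation is that likelihood-ratio dominance of the priors is inherited by the posteriors conditional on \emph{any} signal. If $\beta_A\leq_{LR}\beta_B$ and $s$ lies in the support of the relevant experiment, then $\beta_i^s(\omega)\propto\beta_i(\omega)p(s\mid\omega)$ for $i\in\{A,B\}$, so for $\omega'>\omega$ the cross-products $\beta_B^s(\omega')\beta_A^s(\omega)$ and $\beta_A^s(\omega')\beta_B^s(\omega)$ differ from the prior cross-products only by the common positive factor $p(s\mid\omega')p(s\mid\omega)$, whence $\beta_A^s\leq_{LR}\beta_B^s$. (This step uses neither the experiment's MLRP nor its Blackwell ranking.) Since likelihood-ratio dominance implies first-order stochastic dominance, and a mean is the expectation of an increasing function of the state, we get $m_A^s\leq m_B^s$ for every signal realization, under both $\mathcal E$ and $\tilde{\mathcal E}$.

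Consequently $\abs{m_i^s-m_j^s}=m_B^s-m_A^s$ pointwise, so for either choice of $i$ (with $j$ the other individual) and for each experiment,
\[
\E_i^{\mathcal E}\!\left[\abs{m_i^s-m_j^s}\right]=\E_i^{\mathcal E}\!\left[m_B^s\right]-\E_i^{\mathcal E}\!\left[m_A^s\right].
\]
Exactly one of the two terms on the right is $i$'s expectation of her \emph{own} posterior mean; by the martingale property of Bayesian beliefs it equals $m_i$, independently of the experiment. The other term is the cross expectation controlled by \autoref{thm:IVP}. For $i=A$, the right-hand side is $\E_A^{\mathcal E}[m_B^s]-m_A$, and \autoref{thm:IVP} gives $\E_A^{\mathcal E}[m_B^s]\leq\E_A^{\tilde{\mathcal E}}[m_B^{\tilde s}]$, so subtracting the common $m_A$ yields the claim. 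For $i=B$, the right-hand side is $m_B-\E_B^{\mathcal E}[m_A^s]$, and \autoref{thm:IVP} gives $\E_B^{\tilde{\mathcal E}}[m_A^{\tilde s}]\leq\E_B^{\mathcal E}[m_A^s]$, so the inequality points the right way and again the claim follows.

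The only substantive ingredient is the pointwise sign $m_A^s\leq m_B^s$, which linearizes the absolute value; after that the corollary is just a repackaging of the two inequality chains in \autoref{thm:IVP} together with the observation that the ``own-posterior'' expectations are pinned to the priors and hence experiment-independent. I do not anticipate a real obstacle here; the one point worth a line of care is signals for which $p(s\mid\omega)$ vanishes at some states, but the priors' full support makes each posterior well defined on the support of each experiment, and the pointwise ordering persists there under the usual conventions.
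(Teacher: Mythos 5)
Your proof is correct and follows essentially the same route the paper sketches when it declares the corollary equivalent to \autoref{thm:IVP}: conditioning preserves the likelihood-ratio (hence mean) ordering so the absolute value linearizes, the own-posterior expectation is pinned to the prior by the martingale property, and the theorem's two inequality chains do the rest. Your parenthetical observation that the preservation of the LR ordering needs neither the experiment's MLRP nor the Blackwell ranking is a correct (and slightly sharper) refinement of the paper's phrasing, but it does not change the argument.
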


\autoref{cor:disagree} says that---subject to the ordering hypotheses---more information reduces expected disagreement when disagreement is measured by the (absolute) difference in means. We omit a proof as the corollary is equivalent to \autoref{thm:IVP} because any signal from an MLRP-experiment preserves the prior likelihood-ratio ordering of beliefs and thus also the ordering by their means; moreover, for any experiment $\cal E$ and individual $i$, $\E^{\mathcal E}_i[m^s_i]=m_i$.

\autoref{cor:disagree} and IVP relate to the literature on merging of opinions initiated by \citet{BD62}. That literature establishes asymptotic merging under general conditions; we are concerned instead with updating after just one observation.

In the following sections, we apply IVP  to study games with a common prior but asymmetric information. 

\section{Voluntary Testing}
\label{sec:testing}

\paragraph{Model.} An agent has unknown ability or product quality $q\in Q\subset \Reals$, where $Q$ is a finite set. There is a full-support prior $\pi(q)$, whose expectation is denoted $\pi^e$. The agent privately receives information, referred to as his type, $t\in T\equiv [\underline t,\overline t]\subset \Reals$ from the density $f(t|q)$ satisfying the strict MLRP. 
Without loss, we assume $\E[q|t]=t$. The agent then chooses whether to take a test or not. If he takes the test, a result or signal $s \in S$ is drawn from density $g(s|q)$ with MLRP; $s$ is conditionally independent of $t$. After observing whether the agent took the test, and if so, the test result, a decision maker or market forms belief $\delta \in \Delta Q$ and hires the agent at a wage (or buys the product at a price) $w=\delta^e$, the market's posterior expectation of quality. Taking the test costs the agent $c\geq 0$, and so the agent's von Neumann-Morgenstern payoff is $w-c$ if he took the test and $w$ if he did not.

\paragraph{Equilibrium.}

Since the test is conditionally independent of the agent's information, we can think of the market as updating in two steps: it first forms an ``interim belief'' about quality by updating its prior with what the agent's action reveals about his private information $t$; thereafter, if the agent took the test, it updates the interim belief using the test result $s$. Accordingly, denote the market's interim belief when the agent takes the test  (before observing the test result) as $\delta_+$ and when the agent does not as $\delta_-$. The agent's private belief is denoted $\beta(t)$. Plainly, the agent of type $t$ will take the test if and only if (modulo indifference)
$$\E_{s|t}[\delta^e(s;\delta_+)]-c > \delta^e_-,$$
where $\delta^e(s;\delta_+)$ is the market posterior expectation upon observing test result $s$ given interim belief $\delta_+$, and $\delta^e_-$ is the no-test expected quality. The left-hand side of the above inequality is strictly increasing in $t$: for any $t_H>t_L$,
\begin{align*}
\E_{s|t_H}[\delta^e(s;\delta_+)]-\E_{s|t_L}[\delta^e(s;\delta_+)]=&\sum_{q}\left[\beta(q|t_H)-\beta(q|t_L)\right]\E_{s|q}[\delta^e(s;\delta_+)]
\end{align*}
is strictly positive because $\E_{s|q}[\delta^e(s;\delta_+)]$ is increasing in $q$ by the test's MLRP, and $\beta(\cdot|t_H)$ strictly first-order stochastically dominates $\beta(\cdot|t_L)$ by strict MLRP of the agent's private information.\footnote{More precisely, our conclusion holds if $\E_{s|q}[\delta^e(s;\delta_+)]$ is strictly increasing in $q$. Accounting for the possibility that this function may only be weakly increasing would not materially affect what follows. The possibility is ruled out when $g(s|q)$ has the strict MLRP and $\delta_+$ is non-degenerate.}

Therefore, all (weak Perfect Bayesian) equilibria are described by a cutoff $t^*\in T$ such that the agent takes the test if $t>t^*$ and does not if $t<t^*$. Let $\delta_-(t^*)$ and $\delta_+(t^*)$ denote the interim beliefs computed from Bayes rule given any interior cutoff $t^*$. Consistent with the expected gain from testing strictly increasing in type, we restrict attention to equilibria in which if no type gets tested ($t^*=\bar t$), then the off-path interim belief upon testing is $\delta_+(\bar t)\equiv \beta(\bar t)$; similarly, if all types get tested ($t^*=\underline t$), then the no-test off-path belief is $\delta_-(\underline t)\equiv \beta(\underline t)$. Since intervals of signals preserve MLRP structure \citep[][Theorem 4]{Milgrom81}, it holds that for any $t^* \in T$, $\delta_-(t^*)$ and $\delta_+(t^*)$ are likelihood-ratio ordered with all $\beta(t)$.

\begin{lemma}
\label{lem:testing_eqm}
A cutoff $t^*$ is an equilibrium cutoff if and only if:
\begin{enumerate}
\item $t^*$ is interior and 
\begin{equation}
\label{e:testing_eqm}
\E_{s|t^*}[\delta^e(s;\delta_+(t^*))]=\underbrace{\E[t|t<t^*]}_{=\delta^e_-(t^*)}+c;
\end{equation}

\item $t^*=\underline t$ and $\E_{s|\underline t}[\delta^e(s;\pi)]\geq \underline t+c$; or
\item \vspace{5pt}  $t^*=\bar t$ and $\bar t\leq \pi^e+c$.
\end{enumerate}
\end{lemma}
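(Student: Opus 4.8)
The plan is to leverage the single‑crossing structure already established in the discussion preceding the statement. For a candidate cutoff $t^*$, write
\[
G(t;t^*)\equiv\E_{s\mid t}\big[\delta^e(s;\delta_+(t^*))\big]-\delta^e_-(t^*)-c
\]
for the net expected gain from testing of type $t$, where $\delta_\pm(t^*)$ are the interim beliefs the cutoff $t^*$ generates — Bayes rule whenever the relevant event has positive probability, and, within the class of equilibria the paper restricts attention to, the off‑path conventions $\delta_+(\bar t)\equiv\beta(\bar t)$ and $\delta_-(\underline t)\equiv\beta(\underline t)$ otherwise. The text shows $G(\cdot;t^*)$ is continuous and strictly increasing in $t$. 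Hence $t^*$ is an equilibrium cutoff if and only if the beliefs it generates satisfy $G(t;t^*)\ge 0$ for all $t>t^*$ and $G(t;t^*)\le 0$ for all $t<t^*$ — the single type $t=t^*$ being irrelevant for the Bayes‑rule beliefs, with tie‑breaking at the endpoints as in the text. I would then split into the three cases of the statement.

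\emph{Interior cutoff.} If $t^*\in(\underline t,\bar t)$ both $\delta_+(t^*)$ and $\delta_-(t^*)$ are pinned down by Bayes rule, and since $\E[q\mid t]=t$, the law of iterated expectations gives $\delta^e_-(t^*)=\E[q\mid t<t^*]=\E[t\mid t<t^*]$. Continuity of $G(\cdot;t^*)$ together with the two inequalities around $t^*$ forces $G(t^*;t^*)=0$, which is exactly \eqref{e:testing_eqm}. Conversely, if \eqref{e:testing_eqm} holds then strict monotonicity of $G(\cdot;t^*)$ makes every $t>t^*$ strictly prefer to test and every $t<t^*$ strictly prefer not to, so $t^*$ is an equilibrium cutoff; this gives part 1.

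\emph{Boundary cutoffs.} If $t^*=\underline t$ (all types test), the action ``test'' is on path and uninformative about $t$, so $\delta_+(\underline t)=\pi$; the off‑path no‑test belief $\beta(\underline t)$ yields deviation wage $\E[q\mid t=\underline t]=\underline t$. By monotonicity of $G$ the lowest type is the one most tempted to deviate, so this is an equilibrium iff $G(\underline t;\underline t)\ge 0$, i.e.\ $\E_{s\mid\underline t}[\delta^e(s;\pi)]\ge\underline t+c$ — part 2. If $t^*=\bar t$ (no one tests), the no‑test action is on path and uninformative, so $\delta_-(\bar t)=\pi$ and the equilibrium wage is $\pi^e$; the off‑path test belief is $\beta(\bar t)$, and the highest type is the one most tempted to deviate to testing. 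Here the one nontrivial step: the market's interim belief on the (off‑path) test, $\beta(\bar t)$, coincides with the belief of the deviating type $\bar t$, so applying the law of iterated expectations to the common belief $\beta(\bar t)$ gives $\E_{s\mid\bar t}[\delta^e(s;\beta(\bar t))]=\E[q\mid t=\bar t]=\bar t$. Thus equilibrium requires $\bar t-c\le\pi^e$, i.e.\ $\bar t\le\pi^e+c$ — part 3. In both boundary cases the converse is immediate from strict monotonicity of $G$.

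\emph{Main obstacle.} The argument is largely bookkeeping once the strict monotonicity of $G$ (done in the text) is taken as given; the one genuinely substantive observation is the martingale reduction $\E_{s\mid\bar t}[\delta^e(s;\beta(\bar t))]=\bar t$ in the $t^*=\bar t$ case, which converts an apparently endogenous off‑path condition into the clean inequality $\bar t\le\pi^e+c$. Minor care is also needed to pass from the two weak inequalities around the cutoff to the indifference equality \eqref{e:testing_eqm} (using continuity of $G$ in $t$) and to note that the behavior of the single indifferent type does not affect the Bayes‑rule beliefs. I expect no other difficulties.
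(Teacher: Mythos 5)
Your argument is correct, and since the paper explicitly omits this proof as routine, your write-up is exactly the intended argument: reduce everything to the sign of the net gain $G(t;t^*)$, use its continuity and strict monotonicity in $t$ (established in the text) to get indifference at an interior cutoff and the binding constraint at the relevant extreme type at the boundaries, and apply iterated expectations both for $\delta^e_-(t^*)=\E[t\mid t<t^*]$ and for the reduction $\E_{s\mid\bar t}[\delta^e(s;\beta(\bar t))]=\bar t$ in the no-testing case. No gaps.
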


We omit the routine proof. The characterization implies that an equilibrium exists. There may be multiple equilibria. A key observation is that at any interior equilibrium cutoff $t^*$, 
$$\delta^e_-(t^*)=\E[t|t<t^*]<t^*\leq \E_{s|t^*}[\delta^e(s;\delta_+(t^*))]\leq \delta^e_+(t^*)=\E[t|t>t^*],$$
with at least one of the weak inequalities being strict. The weak inequalities follow from IVP, as \autoref{thm:IVP}'s ordering hypotheses are satisfied. Intuitively, as the cutoff type is pooled with all higher types, it expects the test to cause a downward revision from the market's interim quality expectation. The cutoff type is nevertheless willing to take the test because foregoing it would lead to an even worse expectation, $\delta^e_-(t^*)$. It follows that $t^*-\delta^e_-(t^*)\leq c$; equality holds only with a fully informative test. With such a test, the setting is effectively one of costly voluntary disclosure \citep{Jovanovic82,Verrecchia83}. If the test is uninformative, then $\delta^e_+(t^*)-\delta^e_-(t^*)=c$; it is as if the agent can simply take a pure money-burning action.
 

\paragraph{Comparative statics.} What happens when the test becomes (Blackwell) more informative? IVP assures that the cutoff type expects the ``disagreement'' between its private belief and the market's interim belief $\delta_+$ to shrink by more. The cutoff type's expected benefit of taking the test is thus lower. On the other hand, the payoff from not taking the test, $\delta^e_-$, is unaffected. Consequently, fewer types should take the test. Formally:

\begin{proposition}
\label{prop:testing}
If the test becomes more informative, then the smallest and largest equilibrium cutoffs increase.
\end{proposition}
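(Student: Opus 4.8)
The plan is to work with the equilibrium characterization in \autoref{lem:testing_eqm} and track how the relevant objects move as the test becomes Blackwell more informative, treating the two extreme cutoffs separately from the interior ones. First I would argue the comparative static for interior equilibrium cutoffs. Fix a test $g$ and a more informative test $g'$, and define, for each candidate interior cutoff $t^*$, the ``net gain from testing'' functions
\begin{equation*}
\Phi(t^*)\equiv \E_{s|t^*}\big[\delta^e(s;\delta_+(t^*))\big]-\E[t\mid t<t^*]-c, \qquad \Phi'(t^*)\equiv \E_{s'|t^*}\big[\delta^e(s';\delta_+(t^*))\big]-\E[t\mid t<t^*]-c,
\end{equation*}
so that interior equilibrium cutoffs are exactly the zeros of $\Phi$ (resp.\ $\Phi'$). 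The key inequality, which is the heart of the argument, is that $\Phi'(t^*)\le \Phi(t^*)$ for every $t^*$: the no-test term and $c$ are identical across tests, and for the test term, note that the cutoff type's private belief $\beta(t^*)$ and the market's interim-upon-testing belief $\delta_+(t^*)=\beta(\cdot\mid t>t^*)$ are likelihood-ratio ordered with $\beta(t^*)\le_{LR}\delta_+(t^*)$ (as noted in the paragraph preceding \autoref{lem:testing_eqm}), and both $g,g'$ are MLRP-experiments with $g'$ more informative. Applying \autoref{thm:IVP} with Anne $=$ the cutoff type (prior $\beta(t^*)$) and Bob $=$ the market (prior $\delta_+(t^*)$), and with the state relabeled via the posterior-mean map $q\mapsto \delta^e(\cdot)$ so that ``Bob's posterior mean'' is exactly $\delta^e(s;\delta_+(t^*))$, IVP gives $\E_{s'|t^*}[\delta^e(s';\delta_+(t^*))]\le \E_{s|t^*}[\delta^e(s;\delta_+(t^*))]$, hence $\Phi'\le\Phi$ pointwise.

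Next I would use this together with monotonicity/continuity structure to push the cutoffs up. The left-hand side of \eqref{e:testing_eqm} is strictly increasing in $t^*$ through the direct conditioning channel $\E_{s|t^*}[\cdot]$ (by the agent's strict MLRP, exactly as shown in the displayed computation before \autoref{lem:testing_eqm}), and $\E[t\mid t<t^*]$ is increasing in $t^*$ but strictly less steep in the relevant sense, so $\Phi$ is continuous; the single-crossing-type structure of the model ensures that equilibrium cutoffs are well-behaved. For the largest equilibrium cutoff: let $\bar t^*$ be the largest cutoff under $g$. If $\bar t^*=\bar t$, case 3 of \autoref{lem:testing_eqm} reads $\bar t\le \pi^e+c$, a condition independent of the test, so $\bar t$ remains an equilibrium cutoff under $g'$ and is trivially the largest. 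If $\bar t^*$ is interior, then $\Phi(\bar t^*)=0$, so $\Phi'(\bar t^*)\le 0$; since (by maximality of $\bar t^*$) there is no equilibrium cutoff of $g$ above $\bar t^*$, one checks $\Phi$ stays nonpositive on $(\bar t^*,\bar t]$ up to the boundary condition, and because $\Phi'\le\Phi$, the same holds for $\Phi'$, so the largest zero of $\Phi'$ (or the relevant boundary equilibrium) is $\ge \bar t^*$. A symmetric argument at the bottom handles the smallest equilibrium cutoff, using case 2 of \autoref{lem:testing_eqm} (whose condition $\E_{s|\underline t}[\delta^e(s;\pi)]\ge \underline t+c$ does depend on the test, but in the favorable direction: if it holds for $g'$ we are fine, and if it fails we fall into the interior case where $\Phi'\le\Phi$ again pushes the smallest zero up).

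The main obstacle I anticipate is not the IVP application itself—that part is clean once the likelihood-ratio ordering and MLRP of the relevant beliefs/experiments are in place—but rather the bookkeeping around \emph{non-uniqueness} of equilibrium and the boundary cases. Because $\Phi$ need not be monotone (the interim belief $\delta_+(t^*)$ shifts with $t^*$, so the test term is not obviously monotone in $t^*$), one cannot simply invoke a single-crossing argument; the careful statement is about the smallest and largest cutoffs, and I would need to argue that a downward pointwise shift of $\Phi$ to $\Phi'$ cannot create a new equilibrium strictly above the old largest one, nor eliminate all equilibria at or below the old smallest one. I would handle this by a direct comparison of the sets $\{\Phi=0\}\cup\{\text{boundary cases}\}$ and $\{\Phi'=0\}\cup\{\text{boundary cases}\}$, using $\Phi'\le\Phi$, the fixed boundary condition $\bar t\le\pi^e+c$, and the sign of $\Phi$ just below $\bar t$ and just above $\underline t$, together with continuity to get existence of the required zeros via the intermediate value theorem.
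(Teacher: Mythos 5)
Your overall strategy is the same as the paper's: write the two sides of \autoref{e:testing_eqm} as functions of the candidate cutoff, use IVP (\autoref{thm:IVP}) with the cutoff type's private belief and the market's interim belief $\delta_+(t^*)$ (which are likelihood-ratio ordered) to show that the testing-benefit side shifts down pointwise under a Blackwell more informative test while the no-test side is unchanged, and then compare extremal intersections. The IVP step is exactly the paper's key inequality, and your handling of the boundary cases and of the smallest cutoff is sound.

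However, your step for the largest cutoff contains a sign error that breaks the argument as written. If the largest equilibrium cutoff $\bar t^*$ under $g$ is interior, then case 3 of \autoref{lem:testing_eqm} forces $\bar t>\pi^e+c$; and since $\delta_+(\bar t)=\beta(\bar t)$, iterated expectations give $\E_{s|\bar t}[\delta^e(s;\delta_+(\bar t))]=\bar t$ for \emph{any} test, so $\Phi(\bar t)=\bar t-\pi^e-c>0$ for both tests. By maximality of $\bar t^*$ and continuity, $\Phi$ is therefore strictly \emph{positive} on $(\bar t^*,\bar t\,]$, not nonpositive as you assert. Moreover, even granting your claim, the inference runs the wrong way: showing that the more-informative test's function is $\le\Phi\le 0$ above $\bar t^*$ would, if anything, rule out zeros there, whereas what you must establish is the \emph{existence} of a zero (or boundary equilibrium) at or above $\bar t^*$. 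The correct step---and the one the paper uses---is an intermediate value argument: the more-informative test's function is $\le\Phi(\bar t^*)=0$ at $\bar t^*$ and equals $\bar t-\pi^e-c>0$ at $\bar t$, hence it has a zero in $[\bar t^*,\bar t)$, so the largest cutoff weakly increases. Your ``stays nonpositive, hence no crossings'' logic is the right tool for the \emph{smallest} cutoff (ruling out equilibria strictly below the old smallest one and invoking existence), but you have transplanted it to the largest cutoff, where the required conclusion is presence above rather than absence above.
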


\begin{proof}
Consider two 
test
distributions, $\underline g$ and $\overline g$, with the latter more informative. Define $L(t;g) \equiv \E^g_{s|t}[\delta^e(s;\delta_+(t))]$ and $R(t)\equiv \delta^e_-(t)+c$; these correspond to the two sides of \autoref{e:testing_eqm}. IVP (\autoref{thm:IVP}) implies that $L(t;\underline g)\geq L(t;{\bar g})$ for all $t$. We argue below that the largest equilibrium cutoff increases; an analogous argument applies to the smallest.

The conclusion is trivial if $\bar t \leq \pi^e+c$, as \autoref{lem:testing_eqm} implies the largest equilibrium cutoff is $\bar t$ regardless of the test. So suppose $\bar t >\pi^e+c$, or equivalently, $L(\bar t;g)>R(\bar t)$ for any $g$. If $L(t;{\underline g})>R(t)$ for all $t>\underline t$, the only equilibrium cutoff under $\underline g$ is $\underline t$, and the conclusion follows. So suppose $L(t;{\underline g})=R(t)$ for some interior $t$. Continuity and $L(\cdot;\underline g)\geq L(\cdot;\overline g)$ imply that the largest intersection of $L(\cdot;\overline g)$ and $R(\cdot)$ is at least as large as that of $L(\cdot;\underline g)$ and $R(\cdot)$.
\end{proof}

Equilibria with extremal cutoffs are stable in the sense of best-response dynamics. 
\autoref{prop:testing}'s comparative static extends to other stable equilibria, but it can reverse for unstable equilibria, as is a common theme in games with multiple equilibria.

Ex ante, the agent prefers a larger equilibrium cutoff because his ex-ante expected wage is $\pi^e$ in any equilibrium, and a larger cutoff reduces the likelihood of having to take the test. Focussing on the largest equilibrium is thus justified on ex-ante utilitarian grounds if the agent's wage is purely redistributive and information has no efficiency benefit. Indeed, \autoref{prop:testing} implies that from this perspective more informative tests and their largest equilibrium cutoffs are socially preferable---only indirectly, because they lower the deadweight loss from testing.

\autoref{prop:testing}'s lesson that more informative tests reduce testing participation echoes and broadens a point made by \citet{HR18} that fully informative tests do not maximize participation. Unlike them, we do not require the agent to be perfectly informed about quality, and we compare any pair of Blackwell-ranked tests. Their focus was on solving for the optimal test to minimize the market's mean squared error.

\paragraph{Certification by an intermediary.} View the agent's cost $c$ as the price charged by a monopolist testing firm or certifying intermediary who must offer a single test. Momentarily take $c$ as given. The monopolist then seeks to maximize the fraction of agents who take the test, i.e., to minimize the equilibrium cutoff. It follows from \autoref{prop:testing} that (focussing on an extremal equilibrium) the monopolist will choose the least informative test available, if such a test exists---even if all available tests are costless to perform, and \emph{a fortiori}, if more informative tests are more costly. In particular, if the uninformative test is available (and least costly), the monopolist will choose that. Consequently, when the monopolist can choose any test and any price $c$, and we select the profit-maximizing equilibrium cutoff, it is optimal to choose an uninformative test and price $c=\pi^e-\underline t$. All types then get tested and the monopolist makes the maximum possible profit subject to the agent's minimum payoff of $\underline t$.\footnote{If the type distribution under the prior has a strictly decreasing density, then $\delta^e_+(t)-\delta^e_-(t)$ is strictly increasing and hence $\underline t$ is the unique cutoff equilibrium when the test is uninformative and $c=\pi^e-\underline t$.} 

This discussion generalizes themes from \citet{Lizzeri99}, who assumed a perfectly-informed agent, endogenous pricing, and availability of all tests. Our analysis clarifies that the economic force favoring less informative tests does not turn on any of these conditions---rather, even with a partially-informed agent, demand in a profit-maximizing equilibrium at any price is higher when the monopolist chooses a less informative test.

\section{Costly Signaling}
\label{sec:lying}

Our second application shows how IVP is useful in games with asymmetric information even when there may be no ``disagreement'' {in equilibrium} because of (full) separation; rather, what is crucial is how information affects disagreement \emph{off the equilibrium path}. 

\subsection{Model}
We consider communication with lying costs or costly falsification, following \citet{KOS07} and \citet{Kartik09}, but adding exogenous information. A sender and a receiver share a common non-degenerate prior about a state $\omega 
\in \{0,1\}$.
The sender privately learns his type 
$t\in [0,1]$, normalized to equal his private belief that $\omega=1$, drawn from a density $f(t|\omega)$. Our normalization implies that $f(\cdot)$ has the MLRP. The sender sends a report $r\geq 0$ at a cost $c(r,t)$, elaborated below. The receiver forms a belief based on both $r$ and an additional signal $s\in \Reals$ that, conditional on the state $\omega$, is drawn independently of $t$ or $r$ from a density $g(s|\omega)>0$.  Without loss, we assume $g(\cdot)$ satisfies the MLRP. The signal $s$ can either be the receiver's private information or publicly observed after the sender has acted.  Denote the receiver's posterior expectation of the state by $\E [\omega|r,s]$.  The sender's von Neumann-Morgenstern payoff is linear in this expectation; specifically, his payoff is
\begin{equation*}
\E [\omega | r,s] - c(r,t).
\end{equation*}

The receiver's belief updating can be analyzed in two steps: based on the sender's report $r$, she forms an interim belief $\pi(r)\in [0,1]$ about the state (all beliefs refer to the probability of $\omega=1$) and then uses this interim belief to further update from the signal $s$ to form a posterior belief about the state
$\beta(s;\pi(r))$. 
By Bayes rule,
\begin{equation}
\beta(s;\pi)
=\frac{\pi g(s|1)}{\pi g(s|1) + (1-\pi)g(s|0)}.
\label{e:signalingBayes}
\end{equation}
The expected payoff of a type-$t$ sender from report $r$ is thus
\begin{equation*}
\E_{s|t}[
\beta(s;\pi(r))
]-c(r,t),
\end{equation*}
where $\E_{s|t}$ denotes $t$'s expectation over the signal $s$. The first term in the above display is strictly increasing in $\pi(r)$ because a higher interim belief raises the receiver's posterior after any signal.

Assume the cost function $c(r,t)$ is smooth with $\partial c(t,t)/\partial r=0$ for all $t$, i.e., the marginal cost of lying when telling the truth is zero.  Furthermore, $\partial^2 c(r,t)/\partial r^2>0>\partial^2 c(r,t)/\partial r\partial t$ for all $t,r$, i.e., the marginal cost of sending a higher report is increasing in the report and decreasing in the type.

\paragraph{A single-crossing condition.} Due to the receiver's exogenous information, the above cost assumptions do not guarantee a suitable single-crossing property. Consider the sender's indifference curves in the space of his report $r$ and the receiver's interim belief $\pi$. For any type $t$, these indifference curves slope upwards for $r > t$: costly lying requires a higher interim-belief compensation. The requisite single-crossing property
 is that---no matter the exogenous experiment---
 these indifference curves are flatter for higher types, meaning that higher types are more willing to inflate their report to induce a higher interim belief.  \autoref{lem:single-cross} in \autoref{app:proofs} establishes that this property is assured by the following assumption that we will maintain:
\begin{equation}
\text{For $t<1$ and $t<r$:}\quad
\frac{\partial^2 c(r,t)/\partial r \partial t}{\partial c(r,t)/\partial r} \le -\frac{1}{1-t}.
\label{a:SC}	
\end{equation}

Condition \eqref{a:SC} can be interpreted as saying that higher types have a sufficiently large marginal cost advantage relative to marginal cost. All our cost assumptions are satisfied by, for example, $c(r,t)=(r-t)^2$. 

\paragraph{Another interpretation.} Although our model is posed as a communication game, it can also be viewed as adding exogenous information to a variation of the \citet{Spence73} signaling model.  In this interpretation, a worker possesses a private trait, his type $t$ (e.g., intelligence), which is indicative of binary job productivity. An employer also observes a signal about that productivity (e.g., through an interview). The worker's wage depends on the employer's posterior on productivity given the worker's schooling level $r$ and the employer's signal $s$.  The marginal cost of schooling decreases in the characteristic $t$. 
While we assume that higher-type workers intrinsically prefer acquiring more education, 
our analysis also applies if all workers prefer less education; 
see \autoref{sec:signaling_disc}.

\subsection{Equilibrium}

We focus on the least-cost (fully) separating equilibrium, or LCSE, as is standard. In such an equilibrium the sender's pure strategy $\rho:[0,1]\to \Reals_+$ is strictly increasing, with the ``Riley condition'' $\rho(0)=0$. For $r\in [0,\rho(1)]$, the receiver's interim belief upon observing $r$ is $\pi(r)=\rho^{-1}(r)$. Without loss, we stipulate that for $r>\rho(1)$, $\pi(r)=1$. Standard arguments imply that the LCSE strategy $\rho(\cdot)$ must satisfy the differential equation
\begin{equation}
\label{e:DE}
\frac{\partial c(\rho(t),t)}{\partial r} \rho'(t)
= \frac{\partial \E_{s|t}\left[
\beta(s;t)
\right]}{\partial \pi}.
\end{equation}
\autoref{e:DE} obtains from the binding local upward incentive compatibility constraints.
The left-hand side is type $t$'s marginal cost of mimicking a slightly higher type; the right-hand side is the marginal benefit, which comes from inducing a higher receiver interim belief.  This benefit is affected by the sender's belief about the exogenous signal $s$. Since $\rho'(t)>0$ and \autoref{e:DE}'s right-hand side is strictly positive, any solution has $\rho(t)>t$ for $t>0$. 

Condition \eqref{a:SC}, which yields the requisite single-crossing property, guarantees that not only are \autoref{e:DE} and the boundary condition $\rho(0)=0$ necessary in an LCSE, but they are also sufficient (i.e., global incentive compatibility is assured). As standard arguments imply that this boundary-value problem has a unique solution, we state without proof:
\begin{lemma}
\label{lem:uniqueLCSE}
There is a unique LCSE.
\end{lemma}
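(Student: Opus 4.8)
The plan is to show that a strategy is an LCSE strategy if and only if it solves the boundary-value problem given by \eqref{e:DE} together with $\rho(0)=0$, and then that this boundary-value problem has exactly one solution on $[0,1]$. Necessity is routine: in any fully separating equilibrium the interim belief on the range of $\rho$ must be $\pi(r)=\rho^{-1}(r)$, and the standard ``no type wants to mimic a marginally higher or lower type'' arguments force the local incentive constraints to bind, which is precisely \eqref{e:DE}; the least-cost (Riley) selection pins down $\rho(0)=0$, since the lowest type has nobody below to separate from and therefore chooses $r$ to minimize $c(r,0)$, whose minimizer is $r=0$ because $\partial c(t,t)/\partial r=0$ and $c$ is strictly convex in $r$.

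Sufficiency is where \eqref{a:SC} earns its keep. Given a strictly increasing $\rho$ solving \eqref{e:DE} with $\rho(0)=0$, complete the belief specification by $\pi(r)=1$ for $r>\rho(1)$ (off-path reports below $\rho(0)=0$ are moot). The binding local constraints, together with the single-crossing property established in \autoref{lem:single-cross}---the sender's indifference curves in $(r,\pi)$ space are ordered by type, being flatter for higher types---deliver global incentive compatibility by the usual integration argument: against a higher type's bundle a lower type loses more in report cost than it gains in induced belief, and symmetrically for a higher type against a lower type's bundle. Hence any solution of the boundary-value problem is an LCSE strategy, and no other strategy qualifies.

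It remains to count solutions of the boundary-value problem. Let $B(t)$ denote the right-hand side of \eqref{e:DE}---a smooth, strictly positive function of $t$---and put \eqref{e:DE} in normal form $\rho'(t)=\Phi(\rho(t),t)$ with $\Phi(r,t)=B(t)/(\partial c(r,t)/\partial r)$. The denominator is smooth and strictly positive on $\{r>t\}$ by strict convexity in $r$ together with $\partial c(t,t)/\partial r=0$, so $\Phi$ is locally Lipschitz there; Picard--Lindel\"of then gives a unique maximal solution through any interior point with $\rho(t)>t$, and since $\Phi>0$ on $\{r>t\}$ such a solution stays strictly increasing, never returns to the diagonal $r=t$ from above, and (by standard arguments) extends over all of $[0,1]$. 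The one delicate point---and the step I expect to be the main obstacle---is the initial condition, because $\partial c(t,t)/\partial r=0$ makes $\Phi$ singular precisely along the diagonal, where $\rho(0)=0$ sits. This is handled by the familiar local analysis near the origin: from $\partial c(r,t)/\partial r=c_{rr}(t,t)\,(r-t)+O((r-t)^2)$, equation \eqref{e:DE} reduces near $t=0$ to $\tfrac12 c_{rr}(t,t)\,\tfrac{d}{dt}(\rho(t)-t)^2\approx B(0)$, forcing $(\rho(t)-t)^2\sim \kappa\,t$ for one particular constant $\kappa>0$; this pins down the unique solution branch leaving the origin into $\{r>t\}$ (the only branch consistent with separation), after which Picard--Lindel\"of on $(0,1]$ does the rest. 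Hence the boundary-value problem has exactly one strictly increasing solution, and \autoref{lem:uniqueLCSE} follows.
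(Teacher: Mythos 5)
The paper deliberately states this lemma without proof, appealing to ``standard arguments,'' so there is no in-text proof to match against; your proposal is essentially a correct reconstruction of what those standard arguments are, and its architecture---(i) necessity of \eqref{e:DE} plus the Riley condition $\rho(0)=0$, (ii) sufficiency of the boundary-value problem via the single-crossing property from \eqref{a:SC} and \autoref{lem:single-cross}, (iii) existence and uniqueness of the solution to the singular initial-value problem---is exactly the intended decomposition. The one step that needs tightening is your treatment of the singularity at the origin. Showing that any solution through $(0,0)$ must satisfy $(\rho(t)-t)^2\sim\kappa t$ pins down the \emph{leading-order} behavior, but it does not by itself exclude two distinct solutions sharing that leading term and diverging at higher order; nor does it establish existence of a solution emanating from the origin, which is the other delicate half of a singular IVP (Picard--Lindel\"of gives you nothing at the diagonal). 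For uniqueness there is a cleaner argument that bypasses the asymptotics entirely and mirrors the paper's own proof of \autoref{prop:efficiency}: if $\rho_1$ and $\rho_2$ both solve the problem and differ somewhere, then (since they cannot cross at any point strictly above the diagonal, by local Lipschitz uniqueness there) they are ordered, say $\rho_1>\rho_2$ on some interval $(0,\epsilon)$; but $\partial^2 c/\partial r^2>0$ then gives $\partial c(\rho_1(t),t)/\partial r>\partial c(\rho_2(t),t)/\partial r$ and hence $\rho_1'(t)<\rho_2'(t)$ on that interval, so $\rho_1(t)-\rho_2(t)=\int_0^t(\rho_1'-\rho_2')<0$, a contradiction. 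Existence still requires a separate limiting or comparison construction (e.g., solutions started from $(\delta,\delta+\eta)$ squeezed between sub- and supersolutions as $\eta\downarrow 0$), which your local expansion gestures at but does not supply. With those two points patched, the proof is complete and is the one the paper has in mind.
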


\subsection{Comparative statics}

How does (Blackwell) more informative exogenous information affect the sender's LCSE signaling strategy? The strategy is determined by the local upward incentive constraints. Intuitively, when type $t$ mimics a slightly higher type $t+\epsilon$ it creates disagreement: type $t$ views the receiver's interim belief $t+\epsilon$ as higher than the truth. IVP implies that $t$ expects a more informative exogenous signal to correct that interim belief by more, and so $t$'s gain from inducing that interim belief is lower under a more informative experiment. Formally:

\begin{lemma}
\label{lem:compare}
If $\tilde s$ is drawn from a more informative experiment than $s$, then for any $t<1$,
$$\frac{\partial \E_{\tilde s |t} \left[
\beta(\tilde s;t)
\right]}{\partial \pi} \leq \frac{\partial \E_{ s | t} \left[
\beta(s;t)
\right]}{\partial \pi}.$$
\end{lemma}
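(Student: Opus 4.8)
The plan is to reduce Lemma~\ref{lem:compare} to Theorem~\ref{thm:IVP} (IVP) by recognizing the quantity $\partial \E_{s|t}[\beta(s;\pi)]/\partial \pi$, evaluated at $\pi=t$, as (up to a positive factor independent of the experiment) the gap between an expected posterior mean and a prior mean. First I would make the state space binary, $\Omega=\{0,1\}$, so that posterior means are just posterior probabilities of $\omega=1$; here LR-order and MLRP are automatic, as the paper notes, so IVP applies unconditionally. The key identity to establish is a clean expression for $\partial \E_{s|t}[\beta(s;\pi)]/\partial \pi$. Differentiating \eqref{e:signalingBayes} in $\pi$ gives $\partial \beta(s;\pi)/\partial \pi = g(s|0)g(s|1)/[\pi g(s|1)+(1-\pi)g(s|0)]^2$, a manifestly positive quantity. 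Taking the $t$-expectation over $s$ (where the density of $s$ given type $t$ is $t g(s|1)+(1-t)g(s|0)$) is the step to push through carefully.

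Second, I would relate this derivative to a difference of expected posteriors under two nearby interim priors. The cleanest route: show that for the sender of type $t$, who believes $\omega=1$ with probability $t$,
$$\E_{s|t}\bigl[\beta(s;\pi)\bigr] \Big|_{\pi=t} = t = m_{\text{sender}},$$
i.e. when the interim belief coincides with the sender's own belief, iterated expectations returns $t$; this is just the martingale property. Then the derivative $\partial \E_{s|t}[\beta(s;\pi)]/\partial \pi$ at $\pi=t$ measures how fast the \emph{other} party's (receiver's) posterior, as perceived by type $t$, moves as we perturb the receiver's prior away from $t$. Concretely, consider the receiver endowed with interim prior $\pi$ slightly above $t$: then $\E_{s|t}[\beta(s;\pi)]$ is exactly $\E^{\mathcal E}_A[m^s_B]$ with $A$ the sender (prior mean $t$) and $B$ the receiver (prior mean $\pi>t$), since $m_A\le m_B$. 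IVP says this expected posterior is smaller under the more informative experiment: $\E^{\tilde{\mathcal E}}_A[m^s_B]\le \E^{\mathcal E}_A[m^s_B]$ — wait, with $\tilde{\mathcal E}$ less informative the inequality runs $\E^{\mathcal E}_A[m^s_B]\le \E^{\tilde{\mathcal E}}_A[m^s_B]$. Combining with the common value $t$ at $\pi=t$, the difference quotient $[\E_{s|t}[\beta(s;\pi)]-t]/(\pi-t)$ is (weakly) smaller for the more informative experiment for all $\pi>t$; letting $\pi\downarrow t$ gives the derivative inequality. For the left-derivative (perturbing $\pi$ below $t$) one uses the symmetric half of IVP with the roles of ``closer to $m_A$'' reading the other way, and the signs line up so the two-sided derivative inequality holds.

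A cleaner alternative, which I would probably write instead of difference quotients, is to integrate: for any $\pi\ge t$,
$$\int_t^\pi \frac{\partial \E_{\tilde s|t}[\beta(\tilde s;x)]}{\partial \pi}\,dx = \E_{\tilde s|t}[\beta(\tilde s;\pi)]-t \le \E_{s|t}[\beta(s;\pi)]-t = \int_t^\pi \frac{\partial \E_{s|t}[\beta(s;x)]}{\partial \pi}\,dx$$
by IVP applied at every interim prior $x\in[t,\pi]$; then differentiate both sides at $\pi=t$. This shows the \emph{integrand} inequality holds at $x=t$ as long as both integrands are continuous in $x$ there (which follows from the explicit formula), so no monotonicity-in-$x$ is needed — only the endpoint.

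The main obstacle is bookkeeping around IVP's direction: the sender's belief $t$ plays the role of Anne's prior $m_A$, the receiver's \emph{interim} belief $\pi$ (the quantity being differentiated) plays the role of Bob's prior $m_B$, and one must be careful that ``more informative'' in the statement of Lemma~\ref{lem:compare} ($\tilde s$ more informative than $s$) is the opposite of the labeling in Theorem~\ref{thm:IVP} ($\mathcal E$ more informative than $\tilde{\mathcal E}$) — so the conclusion of IVP must be applied with $\mathcal E,\tilde{\mathcal E}$ swapped relative to the tildes here. A secondary, purely technical point is justifying the interchange of differentiation in $\pi$ and expectation in $s$, which is routine given $g(s|\omega)>0$ and the boundedness of $\beta$ and its $\pi$-derivative on compacts, but should be noted. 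Everything else is a short computation.
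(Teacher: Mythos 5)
Your proposal is correct and is essentially the paper's own proof: both identify $\E_{s|t}[\beta(s;t)]=t$ by the martingale property, apply IVP with the sender's belief $t$ as Anne's prior and the receiver's interim belief $t+\epsilon$ as Bob's prior (with the tilde labels swapped to match the lemma's convention), and pass to the derivative via the one-sided difference quotient. The left-derivative worry is moot since the explicit formula shows the expectation is smooth in $\pi$, so the right-sided limit already pins down the derivative, exactly as in the paper.
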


\begin{proof}
IVP (\autoref{thm:IVP}) implies that for any $t<1$ and small $\epsilon>0$,
\begin{equation*}
\E_{\tilde s | t}\left[
\beta(\tilde s; t+\epsilon)
\right] - \E_{\tilde s | t} \left[
\beta(\tilde s; t)
\right] \leq \E_{ s | t}\left[
\beta(s; t+\epsilon)
\right] - \E_{ s | t}\left[
\beta(s; t)
\right],
\end{equation*}
because $\E_{\tilde s | t}[\beta(\tilde s;t)]=\E_{ s | t}[\beta(s; t)
]=t$. (\autoref{thm:IVP}'s ordering hypotheses hold because both the sender's and receiver's information have the MLRP.) The result follows from dividing both sides of the above inequality by $\epsilon$ and taking $\epsilon \to 0$.
\end{proof}

\autoref{lem:compare} implies that more informative exogenous information reduces the right-hand side of \autoref{e:DE}.  Since each type expects a smaller marginal benefit from inducing a higher belief in the receiver, the solution $\rho$ to the differential equation \eqref{e:DE} with boundary 
condition
$\rho(0)=0$ is pointwise lower. It is therefore intuitive, and proved in \autoref{app:proofs}, that:

\begin{proposition}
\label{prop:efficiency}
In the LCSE, every sender type bears a lower signaling cost when the receiver's exogenous information is more informative.
\end{proposition}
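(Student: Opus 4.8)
The plan is to combine \autoref{lem:compare} with a monotone comparison of the differential equations pinning down the two LCSE strategies, and then convert the resulting ordering of strategies into an ordering of costs via convexity of $c$ in its first argument. Fix a more (Blackwell) informative exogenous experiment and a less informative one, and let $\bar\rho$ and $\underline\rho$ be the corresponding LCSE strategies---each well defined and unique by \autoref{lem:uniqueLCSE}, with $\bar\rho(0)=\underline\rho(0)=0$ and each solving \eqref{e:DE} for its own experiment. Write $B^{+}(t)$ and $B^{-}(t)$ for the right-hand side of \eqref{e:DE} under the more- and less-informative experiment, so \autoref{lem:compare} says $B^{+}(t)\le B^{-}(t)$ for all $t<1$; recall also that \eqref{e:DE}'s right-hand side is strictly positive, so $B^{\pm}>0$. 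Using that any LCSE strategy has $\rho(t)>t$ for $t>0$, rewrite \eqref{e:DE} as $\rho'(t)=\Phi^{\pm}(t,\rho(t))$ with $\Phi^{\pm}(t,r)\equiv B^{\pm}(t)\big/\big(\partial c(r,t)/\partial r\big)$. Two facts about $\Phi^{\pm}$ drive the argument: (i) for $r>t$, $\partial c(r,t)/\partial r=\int_{t}^{r}\partial^{2}c(x,t)/\partial r^{2}\,dx$ is positive and strictly increasing in $r$, so $\Phi^{\pm}(t,\cdot)$ is positive and strictly decreasing on $(t,\infty)$; and (ii) since that denominator does not depend on the experiment, $B^{+}\le B^{-}$ yields $\Phi^{+}(t,r)\le\Phi^{-}(t,r)$ for every $t<1$ and $r>t$.

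Next I would show $\bar\rho(t)\le\underline\rho(t)$ for all $t\in[0,1]$ by a last-crossing argument. Suppose not: $\bar\rho(t_{1})>\underline\rho(t_{1})$ for some $t_{1}$. Let $t_{0}\equiv\sup\{t\le t_{1}:\bar\rho(t)\le\underline\rho(t)\}$; since $\bar\rho(0)=\underline\rho(0)$ the set is nonempty, so $t_{0}\in[0,t_{1})$, continuity gives $\bar\rho(t_{0})=\underline\rho(t_{0})$, and $\bar\rho>\underline\rho$ on $(t_{0},t_{1}]$. On the open interval $(t_{0},t_{1})\subset(0,1)$ both strategies are $C^{1}$ and strictly above the identity, so fact (i) and then fact (ii) give $\bar\rho'(t)=\Phi^{+}(t,\bar\rho(t))\le\Phi^{+}(t,\underline\rho(t))\le\Phi^{-}(t,\underline\rho(t))=\underline\rho'(t)$. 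Hence $\underline\rho-\bar\rho$ is nondecreasing on $[t_{0},t_{1}]$, so $\underline\rho(t_{1})-\bar\rho(t_{1})\ge\underline\rho(t_{0})-\bar\rho(t_{0})=0$, contradicting $\bar\rho(t_{1})>\underline\rho(t_{1})$. Thus $t\le\bar\rho(t)\le\underline\rho(t)$ for every $t$.

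To conclude, fix $t$: since $c(\cdot,t)$ is strictly convex with $\partial c(t,t)/\partial r=0$, it is strictly increasing on $[t,\infty)$; together with $t\le\bar\rho(t)\le\underline\rho(t)$ this gives $c(\bar\rho(t),t)\le c(\underline\rho(t),t)$, which is exactly the claim. The step I expect to be the main obstacle is the ODE comparison: because \eqref{e:DE}'s right-hand side feeds back through $\rho$, \autoref{lem:compare}'s pointwise inequality alone does not order the solutions, and it is the monotonicity of $\Phi^{\pm}$ in $r$ (fact (i)) that bridges the gap---supplemented by the last-crossing device, which handles the singularity of the boundary-value problem at $t=0$ (where $\rho'$ may be unbounded) without appealing to a textbook comparison theorem. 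The remaining ingredients---existence and uniqueness of the LCSE, the inequality $\rho(t)>t$, and strict convexity of $c$ in its first argument---are already in hand.
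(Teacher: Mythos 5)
Your proposal is correct and follows essentially the same route as the paper's proof: both reduce the claim, via \autoref{lem:compare}, to showing that the solution of \eqref{e:DE} with a pointwise lower right-hand side is pointwise lower, and both establish this by a crossing-point contradiction that exploits $\partial^2 c/\partial r^2>0$ to compare derivatives where the ordering would be violated, then integrate. Your write-up is slightly more explicit about the final step from $\rho$ being pointwise lower to the cost $c(\rho(t),t)$ being lower (using $\rho(t)\ge t$ and monotonicity of $c(\cdot,t)$ on $[t,\infty)$), which the paper leaves implicit, but the argument is the same.
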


Consequently, every sender type is better off when the receiver's information improves, because in the LCSE every type $t$ expects the receiver's posterior to be $t$ regardless of the exogenous information distribution. Plainly, the receiver is also better off (given any von Neumann-Morgenstern payoff function) with better exogenous information.

\subsection{Discussion and implications}
\label{sec:signaling_disc}
A very similar analysis, with the same conclusion as \autoref{prop:efficiency}, would apply if we had instead assumed \`{a} la \citet{Spence73} that all types would choose $r=0$ under complete information. Specifically, we could have instead assumed the signaling cost function $c(r,t)$ satisfies $\partial c/\partial r>0$, $\partial^2 c/\partial r^2>0$, and $\partial^2 c/\partial r\partial t<0$, and required the inequality in \eqref{a:SC} to hold for all $t<1$ and $r>0$.

One reason we study costly lying is that the analysis also extends to a bounded report space, which is natural there. Suppose the sender's report must be in $[0,1]$, his type space. Then, given our original assumptions on $c(\cdot)$, there is no fully separating equilibrium; in particular, recall that for a strictly increasing strategy $\rho$, a solution to \autoref{e:DE} entails $\rho(t)>t$ for all $t>0$. In a previous version of this paper \citep{KLS19-IVP}, we analyzed a salient equilibrium that extends the LCSE to this case: there is separation up to some cutoff type $t^*<1$ and pooling on $r=1$ thereafter. The sender's strategy in the separating region is unchanged: $\rho(0)=0$ and \autoref{e:DE} holds. The cutoff $t^*$ is determined by its indifference. More informative exogenous information now not only lowers every type's signaling cost, but also raises the cutoff $t^*$ (again due to IVP, similar to the logic that reduces the benefit of pooling in \autoref{sec:testing}). The larger separating region means the sender's signaling is more informative.

One can adapt the analysis to study multi-sender signaling when each sender gets a conditionally independent signal. Suppose some senders are upward biased and others are downward biased, and each chooses a report in $[0,1]$.\footnote{A downward-biased sender's payoff is $-\E[\omega|r,s]-c(r,t)$. The relevant strategy for this sender is given by $\rho(1)=1$, the analog of \autoref{e:DE} with the right-hand side's sign flipped, and all types below some cutoff pooling on report $r=0$. For any $t\in (0,1)$, $\rho(t)<t$, i.e., a downward-biased sender deflates his report.} This model applies to competing persuaders with falsification costs in various domains: lobbyists, media, legal parties, etc. From each sender's perspective, the other senders' reports are endogenous experiments. The logic underlying \autoref{prop:efficiency} implies that senders' strategies are \emph{strategic complements}: in equilibria where each sender uses a cutoff strategy as described above, each sender reveals more and bears a lower signaling cost when other senders reveal more. The receiver thus learns more with more senders; importantly, beyond the obvious direct benefit of adding a sender, there is an indirect benefit of existing senders revealing more. Each sender's ex-ante welfare is also higher, as his equilibrium signaling cost decreases while the receiver's average posterior is unaffected. Furthermore, if any sender's cost increases in a suitable sense (for example, if sender $i$'s cost is $k_i c(r,t)$ and $k_i>0$ increases), 
then not only does that sender reveal more, but so do all other senders.

Although it may seem unsurprising that better receiver information reduces a sender's incentive to incur falsification costs, we emphasize that \autoref{prop:efficiency} relies on IVP.  \suppappendixref{app:new} shows that if the sender's payoff is not linear in the receiver's posterior, or if the receiver's information violates MLRP (in a multi-state extension of the model), then better exogenous information can \emph{increase} the sender's marginal benefit from mimicking a higher type, leading to higher equilibrium signaling costs.

\citet[Section III]{Frank85} also suggests that better exogenous information can reduce dissipative  signaling. \citet{Weiss83} studies when exogenous information allows for separating equilibria even absent any heterogeneity in the direct costs of signaling; his focus is not on comparative statics. \citet{DG14} emphasize the stability of non-separating equilibria when there is ``double crossing'' of appropriate indifference curves, contrary to the single crossing assured by our condition \eqref{a:SC}. \citet{Truyts14} shows that better exogenous information can exacerbate dissipative signaling when signaling is noisy.

\section{Conclusion}
\label{sec:conclusion}

IVP is instructive more broadly than for just the two applications with asymmetric information developed in this paper. In an earlier version \citep{KLS19-IVP}, we used IVP to study voluntary disclosure with either concealment or disclosure costs. 

We close by briefly commenting on a domain with symmetric information. Consider Bayesian persuasion \citep{KG11}. \citet{AC16} develop a general analysis under heterogeneous priors; see also \citet{OR19}. When the sender's preferences are state-independent and concave in the receiver's posterior expectation, there is no scope for beneficial persuasion under common priors. \citet[Section 4.3]{AC16} show that this observation does not hold generically with heterogeneous priors and at least three states. 
Our \autoref{thm:IVP} delivers additional insights. For example, if the priors are likelihood-ratio ordered, then among MLRP-experiments the sender prefers less informative experiments when facing a favorable receiver, i.e., one whose prior dominates (resp., is dominated by) the sender's if the sender's utility is increasing (resp., decreasing) in the receiver's posterior expectation. Consequently, in that scenario, an uninformative experiment is optimal if all and only MLRP-experiments are available.\footnote{Following our discussion after \autoref{thm:IVP}, priors ordered by FOSD are sufficient for optimality of an uninformative experiment \citep[also note this point restricting attention to a subset of MLRP-experiments]{OR19}, but not for the broader preference ranking of MLRP-experiments by their informativeness.}

\newpage

\appendix
\numberwithin{equation}{section}
\numberwithin{proposition}{section}
\numberwithin{lemma}{section}
\numberwithin{example}{section}
\numberwithin{observation}{section}
\numberwithin{claim}{section}

\allowdisplaybreaks


\section{Proofs}
\label{app:proofs}

\begin{proof}[Proof of \autoref{thm:IVP}] 
\label{proof:IVP}
We prove the statement about $A$'s expectations; $B$'s are analogous. 
It is sufficient to establish $\E_A^{\mathcal E} \left[m_B^s\right]\leq \E_A^{\tilde{\mathcal E}} \left[m^{\tilde s}_B\right]$, as the other two inequalities are implied by the observations in the paragraph preceding \autoref{thm:IVP}.

Since $\mathcal{E}$ is more informative than $\mathcal{\tilde E}$, there is a joint distribution (more precisely, measure) over signal pairs $(s,\tilde s)$ in each state such that the conditional distribution of $\tilde s$ given $s$ is independent of the state. Fix such a set of joint distributions. Since $\beta_A \leq_{LR} \beta_B$, it follows that for any $\tilde s$, $\beta_A^{\tilde s}\leq_{LR} \beta_B^{\tilde s}$ and hence $\beta_A^{\tilde s}\leq_{FOSD} \beta_B^{\tilde s}$. (Here, $\beta_i^{\tilde s}$ denotes $i$'
s posterior after observing signal $\tilde s$.)
Experiment $\mathcal E$'s MLRP implies that its distribution of signals is increasing in the state in the sense of FOSD. Consequently, conditional on any $\tilde s$, the probability distribution ascribed by $B$ to signals $s$ dominates in FOSD that ascribed by $A$.\footnote{Writing $F_i$ and $F$ for cumulative distributions in the natural way, it holds for any $s$ and $\tilde s$ that $F_i(s|\tilde s)=\sum_{\omega}F(s|\omega) \beta_i(\omega|\tilde s)$, where we can write $F_i(s|\omega)$ rather than $F_i(s|\omega,\tilde s)$ because $\mathcal E$ is more informative than $\mathcal{\tilde E}$. $F_i(s|\tilde s)$ is thus the expectation of the decreasing function $F(s|\cdot)$ with a distribution that is dominated in FOSD for $i=A$ as compared to $i=B$.} As the mean $m^s_B$ is increasing in $s$ (by $\mathcal E$'s MLRP), it follows that  
for any $\tilde s$,
\begin{equation}
\E_A^{\mathcal E} \left[m_B^s \mid \tilde{s}\right]
\leq \E_B^{\mathcal E} \left[m_B^s \mid \tilde{s}\right].
\label{e:condmean}	
\end{equation}
Experiment $\mathcal E$ being more informative than $\mathcal{\tilde E}$ implies that for any $s$ and $\tilde s$, $m_B^s=m_B^{s,\tilde s}$, where $m^{s,\tilde s}_B$ denotes $B$'s posterior mean having observed $(s,\tilde s)$. As $\E_B^{\mathcal E} [m_B^{s,\tilde s} \mid \tilde{s}]=m^{\tilde s}_B$ by iterated expectation, it follows from \eqref{e:condmean} that for any $\tilde s$,
\begin{equation}
\E_A^{\mathcal E} \left[m_B^s \mid \tilde{s}\right]
\leq m_B^{\tilde s}.
\label{e:first}
\end{equation}
Taking the expectation in \eqref{e:first} over $\tilde{s}$ using the prior ${\beta}_A$ yields
\begin{align*}
\E_A^{\mathcal E} \left[m_B^s \right]=\E_A^{\tilde{\mathcal E}}\left[ \E_A^{\mathcal E} \left[m_B^s \mid \tilde{s}\right] \right]
\leq
\E_A^{\tilde{\mathcal E}}\left[m_B^{\tilde{s}} \right]
,
\end{align*}
where the equality is by iterated expectation.
\end{proof}

\bigskip

\begin{lemma}
\label{lem:single-cross}
For $t<1$ and $t<r$, $\dfrac{\partial c(r,t)/\partial r}{\partial \E_{s|t} [
\beta(s;\pi)
]/\partial \pi}$
strictly decreases in $t$.
\end{lemma}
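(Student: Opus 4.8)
The plan is to show that the numerator $\partial c(r,t)/\partial r$ is (weakly) decreasing in $t$ while the denominator $\partial \E_{s|t}[\beta(s;\pi)]/\partial\pi$ is strictly increasing in $t$, with both quantities strictly positive on the relevant region $t<1$, $t<r$; the ratio of a nonnegative decreasing function over a positive increasing function is then strictly decreasing. The numerator is immediate from the maintained cost assumptions: $\partial^2 c/\partial r\partial t<0$ gives monotonicity, and $\partial c(t,t)/\partial r=0$ together with $\partial^2 c/\partial r^2>0$ gives $\partial c(r,t)/\partial r>0$ for $r>t$. So the substantive work is entirely about the denominator.

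For the denominator, first I would differentiate \eqref{e:signalingBayes} in $\pi$ to get $\partial\beta(s;\pi)/\partial\pi = g(s|1)g(s|0)/[\pi g(s|1)+(1-\pi)g(s|0)]^2 > 0$, so $\partial \E_{s|t}[\beta(s;\pi)]/\partial\pi = \E_{s|t}[\partial\beta(s;\pi)/\partial\pi]$ is strictly positive. To get strict monotonicity in $t$, write $\E_{s|t}[\cdot] = \sum_\omega f(\omega|t)\,\E_{s|\omega}[\cdot]$ where $f(\omega|t)$ is the sender's posterior on $\omega$ given type $t$ (here just $\beta(t)$, i.e., $\Pr(\omega=1\mid t)=t$ by the normalization), so that for $t_H>t_L$ the difference equals $[\,\E_{s|\omega=1}-\E_{s|\omega=0}\,](\partial\beta/\partial\pi)$ times $(t_H-t_L)$. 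Hence it suffices that $\E_{s|\omega}[\partial\beta(s;\pi)/\partial\pi]$ is strictly increasing in $\omega$, i.e., strictly larger in state $1$ than in state $0$; since $g(\cdot)$ has the (strict, given $g>0$ and non-degeneracy) MLRP, the state-$1$ density first-order stochastically dominates the state-$0$ density, so this reduces to showing $s\mapsto\partial\beta(s;\pi)/\partial\pi$ is increasing in $s$. From the closed form above, $\partial\beta(s;\pi)/\partial\pi = \ell(s)/[\pi\ell(s)+(1-\pi)]^2$ with $\ell(s)\equiv g(s|1)/g(s|0)$ increasing in $s$ by MLRP; and $x\mapsto x/(\pi x+1-\pi)^2$ is increasing on the range where $\pi x < 1-\pi$ but decreasing where $\pi x>1-\pi$, so monotonicity in $s$ is \emph{not} automatic.

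The main obstacle is thus exactly this last point: $\partial\beta(s;\pi)/\partial\pi$ need not be monotone in $s$, so the naive ``FOSD plus monotone integrand'' argument does not directly deliver that $\E_{s|\omega}[\partial\beta/\partial\pi]$ increases in $\omega$. I would circumvent this by not differentiating under the integral in that form. Instead, note $\E_{s|t}[\beta(s;\pi)]$ is the sender's expectation, computed under his type-$t$ beliefs, of the receiver's posterior; since the receiver's experiment has the MLRP and the sender's type-$t$ belief is $\mathrm{LR}$-ordered by $t$, I can invoke Theorem \ref{thm:IVP} (IVP) directly: for $t$ held fixed as the ``true'' believer and varying the \emph{interim} belief $\pi$ that seeds the receiver's posterior, $\E_{s|t}[\beta(s;\pi)]$ is a smooth function of $\pi$ whose derivative at $\pi=\pi$ I can compare across sender types via the same $\epsilon$-perturbation trick used in the proof of Lemma \ref{lem:compare} — IVP gives $\E_{s|t}[\beta(s;\pi+\epsilon)] - \E_{s|t}[\beta(s;\pi)]$ is larger, the closer the seed $\pi$ is to $t$'s own prior, and monotonicity of this in $t$ for fixed $\pi$ with $t<\pi$ (or the direct FOSD comparison on the sender's posteriors over $\omega$, which \emph{is} valid because $\E_{s|\omega}[\beta(s;\pi)]$ — as opposed to its $\pi$-derivative — is increasing in $\omega$ by the receiver's MLRP, and differentiation in $\pi$ is a limit of such differences) yields that the denominator strictly increases in $t$. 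Combining: numerator weakly decreasing and positive, denominator strictly increasing and positive, so the ratio strictly decreases in $t$. $\qquad\blacksquare$
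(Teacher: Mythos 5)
Your overall strategy---numerator positive and decreasing in $t$, denominator positive and strictly increasing in $t$---cannot work, because the denominator is \emph{not} in general increasing in $t$. Differentiating \eqref{e:signalingBayes} gives $\partial\beta(s;\pi)/\partial\pi=\beta(s;\pi)(1-\beta(s;\pi))/(\pi(1-\pi))$, so the denominator equals $\E_{s|t}[\beta(s;\pi)(1-\beta(s;\pi))]/(\pi(1-\pi))$, and the integrand $\beta(1-\beta)$ is a hump-shaped, non-monotone function of $s$; you correctly identified this obstacle, but the escape route you sketch does not close it. For a concrete failure: with a binary signal, $g(h|1)=0.9$, $g(h|0)=0.1$, and $\pi=0.9$, one computes $\E_{s|1}[\beta(1-\beta)]\approx 0.036$ while $\E_{s|0}[\beta(1-\beta)]\approx 0.226$, so the denominator is strictly \emph{decreasing} in $t$ there. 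Your appeal to IVP and to ``differentiation is a limit of differences'' conflates monotonicity in $\omega$ of the level $\E_{s|\omega}[\beta(s;\pi)]$ with monotonicity in $\omega$ of its $\pi$-derivative; the former does not imply the latter, and \autoref{thm:IVP} compares experiments for a fixed believer rather than believers for a fixed experiment, so it does not deliver the cross-$t$ comparison you need here.

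The telltale sign is that your argument never uses condition \eqref{a:SC}, which is the maintained assumption the lemma is really about: if simple monotonicity of numerator and denominator sufficed, the sign condition $\partial^2 c/\partial r\partial t<0$ alone would do and \eqref{a:SC} would be superfluous. The paper's proof instead compares logarithmic derivatives: a ratio $N(t)/D(t)$ of two positive functions strictly decreases when $N'/N<D'/D$; the left side equals $(\partial^2 c/\partial r\partial t)/(\partial c/\partial r)\le -1/(1-t)$ by \eqref{a:SC}; and writing $\E_{s|t}[\cdot]=t\,\E_{s|1}[\cdot]+(1-t)\,\E_{s|0}[\cdot]$ shows $D'/D=(X-Y)/(tX+(1-t)Y)$ with $X=\E_{s|1}[\beta(1-\beta)]>0$ and $Y=\E_{s|0}[\beta(1-\beta)]>0$, a quantity that is strictly greater than $-1/(1-t)$ even when $X<Y$ (their difference is $X/[(1-t)(tX+(1-t)Y)]>0$). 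In other words, the denominator is allowed to fall with $t$; the content of \eqref{a:SC} is precisely that the numerator falls proportionally faster than the denominator possibly can. Your proof needs to be restructured along these lines.
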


\begin{proof}
Differentiating \eqref{e:signalingBayes} and manipulating, the sender's marginal rate of substitution between $r$ and $\pi$ is given by
\begin{equation*}
\frac{\partial c(r,t)/\partial r}{\partial \E_{s|t} [
\beta(s;\pi)
] /\partial \pi}
=\pi(1-\pi)\frac{\partial c(r,t)/\partial r}{\E_{s|t}[
\beta(s;\pi)
(1-
\beta(s;\pi)
)]}.
\end{equation*}
Differentiating with respect to $t$, noting that $\E_{s|t}[\cdot]=t[\cdot]+(1-t)[\cdot]$, and rearranging, we see that for $t < m$, the marginal rate of substitution strictly decreases in $t$ if
\begin{equation*}
\frac{\partial^2 c(r,t)/\partial r\partial t}{\partial c(r,t)/\partial r} < \frac{\E_{s|1}[\beta(s;\pi)(1-\beta(s;\pi))]-\E_{s|0}[\beta(s;\pi)(1-\beta(s;\pi))]}{t\E_{s|1}[\beta(s;\pi)(1-\beta(s;\pi))]+(1-t)\E_{s|0}[\beta(s;\pi)(1-\beta(s;\pi))]}.
\end{equation*}
Since both
$\E_{s|1}[\beta(s;\pi)(1-\beta(s;\pi))]$ and $\E_{s|0}[\beta(s;\pi)(1-\beta(s;\pi))]$ are strictly positive, for $t<1$ the right-hand side of the above inequality is strictly greater than
\begin{align*}
\frac{0-\E_{s|0}[\beta(s;\pi)(1-\beta(s;\pi))]}{0 +(1-t)\E_{s|0}[\beta(s;\pi)(1-\beta(s;\pi))]} 
= -\frac{1}{1-t}.
\end{align*}   
So, if \eqref{a:SC} holds, the marginal rate of substitution strictly decreases in  $t$ for $t<1$ and $t < r$.
\end{proof}

\bigskip

\begin{proof}[Proof of \autoref{prop:efficiency}]
If we show that $\rho(t)$ decreases pointwise when the right-hand side
of \autoref{e:DE} decreases for all $t$, then the result follows from \autoref{lem:compare}. Accordingly, let $\tilde\rho(t)$ and $\rho(t)$ be two solutions to \autoref{e:DE}, with $\tilde\rho(0)=\rho(0)=0$, where $\tilde\rho$ solves \autoref{e:DE} with a pointwise lower right-hand side.
For any $t>0$, if $\tilde\rho(t)=\rho(t)$ then $\rho'(t)\geq \tilde\rho'(t)>0$.  This
implies that at any touching point, $\rho$ must touch $\tilde\rho$
from below.  Consequently, by continuity,
\begin{equation*}
\rho(t')\geq \tilde\rho(t') \text{ for } t'>0 \implies \rho(t)\geq \tilde\rho(t) \text{ for all } t\geq t'.
\end{equation*}
Now suppose, to contradiction, that $\tilde\rho(\hat t) > \rho(\hat t)$ for some for $\hat t > 0$.  It must hold that $\tilde\rho(t) > \rho(t)$ for all $t\in (0,\hat t)$.  Since $\partial^2 c/\partial r^2>0$ and $\tilde\rho$ corresponds to a lower right-hand side of \autoref{e:DE}, it follows from \autoref{e:DE} that $\tilde\rho'(t)< \rho'(t)$ for all $t\in (0,\hat t)$.  But then
\begin{equation*}
\tilde\rho(\hat t)- \rho(\hat t)=\int_{0}^{\hat t}\left[\tilde\rho'(t)-\rho'(t)\right]\, \mathrm{d}t < 0,
\end{equation*}
a contradiction.
\end{proof}

\newpage

\singlespacing 
\bibliographystyle{econ-aea-NK}
\bibliography{IVP}

\newpage


\section{Supplementary Appendix (Not for Publication)}
\label{app:supp}

\subsection{Discussion of \autoref{thm:IVP}}
\label{app:tightness}

The following example shows that even with likelihood-ratio ordered priors, the ``direction'' portion of \autoref{thm:IVP} can fail with a non-MLRP experiment.

\begin{example}
\label{eg:non-MLRP-exp}
Let $\Omega=\{\omega_1,\omega_2,\omega_3\}$, where $\omega_1<\omega_2<\omega_3$.  Consider the following non-MLRP experiment $\mathcal E$ with signal space $S=\{s_1,s_2\}$:
\begin{align*}
\begin{bmatrix}
    0 & 1 & 0 \\
    1 & 0 & 1 \\
  \end{bmatrix},
\end{align*}
where the entry in row $r$ and column $c$ is $\Pr(s_r|\omega_c)$.
Consider priors ${\beta_A}=(1,0,0)<_{LR} \beta_B=(0,x,1-x)$ for any $x\in (0,1)$. Plainly, $m^{s_2}_B=\omega_3$, and  hence $\omega_1=m_A<m_B<\omega_3=\E^{\mathcal E}_A[m^s_B]$. \finishpoint
\end{example}

The priors in \autoref{eg:non-MLRP-exp} violate full support, but the point goes through if $\beta_A$ and $\beta_B$ are perturbed to satisfy full support. \citet[pp.~674--675]{AC16} use a similar example to illustrate how a ``skeptic'' can design information to persuade a ``believer.''

\bigskip

\subsection{The role of linearity and MLRP in the signaling application}
\label{app:new}

Consider the costly signaling application from \autoref{sec:lying}. Recall that in the LCSE, the sender's strategy $\rho(\cdot)$ is determined by the initial condition $\rho(0)=0$ and the differential equation \eqref{e:DE}:
\begin{equation*}
\frac{\partial c(\rho(t),t)}{\partial r} \rho'(t)
= \frac{\partial \E_{s|t}\left[
\beta(s;t)
\right]}{\partial \pi}. 
\end{equation*}
\autoref{lem:compare} established that
\begin{equation}
\frac{\partial \E_{\tilde s |t} \left[
\beta(\tilde s;t)
\right]}{\partial \pi } \leq \frac{\partial \E_{ s | t} \left[
\beta(s;t) \right]}{\partial \pi}
\label{e:rhsDE}
\end{equation}
when signal $\tilde{s}$ is more informative (i.e., drawn from a more informative experiment) than signal $s$.  Inequality \eqref{e:rhsDE} implies that the solution to the aforementioned initial-value problem is pointwise lower under the more informative experiment, and hence the equilibrium signaling level $\rho(t)$ is lower for every type when the receiver has access to $\tilde{s}$ rather than $s$.

We show below how the conclusion can be altered by dropping either linearity of the sender's payoff in the receiver's posterior (\autoref{eg:nonlinearsignaling}) or the MLRP of the receiver's experiments (\autoref{eg:nonMLRPsignaling}).

\begin{example}
\label{eg:nonlinearsignaling}
Letting $V(\beta)\equiv \beta/(1-\beta)$, suppose the sender's payoff is 
$$V(\beta)-c(r,t),$$
which is convex in the receiver's posterior $\beta$. Condition \eqref{a:SC} in \autoref{sec:lying} continues to imply the relevant single-crossing condition for this modified objective. Using Bayes rule, we compute
\begin{equation*}
\E_{s|t}[V(\beta(s;\pi))] = \frac{\pi}{1-\pi}\E_{s|t}\left[\frac{g(s|1)}{g(s|0)}\right].
\end{equation*}
Differentiating and evaluating at $\pi=t$,
\begin{equation*}
\frac{\partial \E_{s|t}[V(\beta(s;t))]}{\partial \pi} = 
 \frac{1}{t(1-t)}\E_{s|t}\left[\frac{\beta(s;t)}{1-\beta(s;t)}\right].
\end{equation*}
The term inside the expectation operator on the right-hand side above is a convex function of $\beta(\cdot)$.  It follows that
\begin{equation*}
 \frac{\partial \E_{\tilde{s}|t}[V(\beta(\tilde{s};t))]}{\partial \pi} \ge \frac{\partial \E_{s|t}[V(\beta(s;t))]}{\partial \pi},
\end{equation*}
by contrast to \eqref{e:rhsDE}.  That is, the convexity in $V(\cdot)$ is strong enough to ensure that 
the marginal benefit from inducing a higher interim belief $\pi$ (locally, at $\pi=t$) is higher when the exogenous signal is more informative. It follows that in the LCSE, all types bear a \emph{higher} signaling cost when the exogenous signal is more informative.\footnote{On the other hand, if $V(\beta)\equiv \log[\beta/(1-\beta)]$, then the local marginal benefit of inducing a higher interim belief is independent of the exogenous experiment. The reason is that $V(\beta(s,\pi))=\log\left(\frac{\pi}{1-\pi}\right)+\log\left(\frac{g(s|1)}{g(s|0)}\right)$ and hence $\partial \E_{s|t}[V(\beta(s;t))]/\partial \pi$ does not depend on $g(\cdot)$. Note that $V(\cdot)$ here is neither convex nor concave.} \finishpoint
\end{example}

\begin{example}
\label{eg:nonMLRPsignaling}
To see that MLRP-experiments are important, we have to modify the signaling model of \autoref{sec:lying} by introducing more states, because any experiment in a two-state model satisfies MLRP.

Assume a full-support common prior about the state $\omega \in \{0,1, 2\}$. The sender receives some private information, indexed by $t \in [0, 1]$, which updates his belief about the state to $(z, 1-z(1+t), zt)$, where each element of this vector is the probability assigned to the corresponding state. The parameter $z\in (0,1/2)$ is a commonly-known constant.  We refer to $t$ as the sender's type. Letting $M(\beta)\equiv \sum_{\omega} \omega \beta(\omega)$ be the receiver's expectation of the state when she holds belief $\beta$, the sender's payoff is 
$$M(\beta) - c(r,t).$$

Let $s$ represent the outcome of an uninformative experiment, and let ${\beta}^s_{\hat t}$ represent the posterior of the receiver after observing $s$ when she puts probability one on the sender's type $\hat t$.  It clearly holds that 
\begin{equation*}
\E_{s|t}\left[M({\beta}^s_{\hat t})\right] = M(\beta^s_{\hat t}) =  1-z+z\hat t.
\end{equation*}
The derivative with respect to $\hat t$, evaluated at $\hat t=t$, is
\begin{equation}
\frac{\partial \E_{s|t} \left[M({\beta}^s_ t)\right]}{\partial \hat t}	= z.
\label{e:opposite_1}
\end{equation}

Now consider an informative experiment with a binary signal space, $\tilde{s} \in \{l,h\}$. Let the probability distributions $g(\tilde{s}|\omega)$ be given by:
\begin{align*}
\begin{bmatrix}
g(l|0) & g(l|1) & g(l|2) \\
g(h|0) & g(h|1) & g(h|2) \\
  \end{bmatrix}
= \begin{bmatrix}
    0 & 1 & 0 \\
    1 & 0 & 1 \\
  \end{bmatrix}.
\end{align*}
This experiment is the same as that in \autoref{eg:non-MLRP-exp} of  \suppappendixref{app:tightness}; it does not have the MLRP.  

Suppose the receiver ascribes probability one to the sender's type $\hat t$.  By Bayes rule, if the signal realization is $\tilde s=l$, the receiver's posterior is ${\beta}^l_{\hat t}=(0,1,0)$, with $M(\beta^l_{\hat t})=1$.  For signal realization $\tilde s=h$,
\begin{equation*}
{\beta}^h_{\hat t}= \left(\frac{1}{1+\hat t},0,\frac{\hat t}{1+\hat t}\right), \quad \text{with} \quad M({\beta}^h_{\hat t}) = \frac{2\hat t}{1+\hat t}.
\end{equation*}
The sender of type $t$'s expectation is
\begin{equation*}
\E_{\tilde{s}|t} \left[M({\beta}^{\tilde s}_{\hat t})\right] = (1-z(1+t)) M(\beta^l_{\hat t}) + z(1+t) M(\beta^h_{\hat t}). 
\end{equation*}
The derivative with respect to $\hat t$, evaluated at $\hat t=t$, is
\begin{equation}
\frac{\partial \E_{\tilde{s}|t} \left[M({\beta}^{\tilde{s}}_t)\right]}{\partial \hat t} =\frac{2z}{1+ t}.
\label{e:opposite_2}
\end{equation}
Combining \eqref{e:opposite_1} and \eqref{e:opposite_2},
\begin{equation}
\frac{\partial \E_{\tilde{s}|t} \left[M({\beta}^{\tilde{s}}_t)\right]}{\partial \hat t}	\geq \frac{\partial \E_{s|t} \left[M({\beta}^s_ t)\right]}{\partial \hat t},
\label{e:opposite_3}
\end{equation}
which is the opposite inequality to \eqref{e:rhsDE}, even though $\tilde s$ is drawn a more informative experiment than $s$.

In this example, both $\E_{s|t}[M(\beta(s;\hat t))]$ and $\E_{\tilde{s}|t} [ M(\beta(\tilde{s};\hat t))]$ are supermodular in the sender's type $t$.  The assumption that $\partial c(r,t)/\partial r\partial t < 0$ ensures that indifference curves in the space of $(r,\hat t)$ for different types are single crossing.  As local incentive compatibility then implies global incentive compatibility, \eqref{e:opposite_3} implies that in the LCSE all types incur higher signaling costs when the receiver has access to the more informative experiment. \finishpoint
\end{example}

\end{document}